\documentclass[journal]{IEEEtran}

\usepackage{amsfonts,amscd,mathrsfs,amsmath,amsthm,amssymb}
\usepackage{mathtools}
\usepackage{xparse}
\usepackage{graphicx}
\usepackage{float}
\usepackage{pifont}
\usepackage[dvipsnames]{xcolor}
\usepackage{colortbl}
\usepackage{multirow}
\usepackage{enumerate}   
\usepackage{comment}
\usepackage{booktabs}
\usepackage{cancel}
\usepackage{listings}
\usepackage{Files/lieart}
\usepackage{Files/macros}

\usepackage{youngtab}

\PassOptionsToPackage{hyphens}{url}\usepackage{hyperref}

\hypersetup{
    colorlinks=true,
    linkcolor=magenta!80!black,  
    citecolor=magenta!80!black,
}
\usepackage{url} 

\usepackage[capitalise]{cleveref}
\usepackage{physics}
\usepackage{bm}
\usepackage{bbm}
\usepackage{caption}

\newcommand{\Sym}{\mathrm{Sym}}
\renewcommand{\L}{\mathscr{L}} 

\renewcommand{\Z}{\mathbb{Z}}	
\renewcommand{\CC}{\mathbb{C}}

\renewcommand{\C}{\mathcal{C}} 

\renewcommand{\L}{\mathscr{L}} 
 
\renewcommand{\H}{\mathcal{H}} 

\newcommand{\diag}{\text{diag}}

\newcommand{\Avec}{\bm{A}}
\newcommand{\Bvec}{\bm{B}}

\begin{document}

\title{Minimal Permutation-Invariant Qudit Codes from Edge-Colorings of Complete Graphs}

\author{Eric~Kubischta, Ian~Teixeira

\thanks{Both authors contributed equally to this work.}
\thanks{Eric Kubischta is affiliated with the Department of Mathematics, Florida State University, Tallahassee, FL 32306.}
\thanks{Ian Teixeira is affiliated with the Department of Mathematics, University of California, San Diego, CA 92093}
}

\maketitle

\begin{abstract}
We study permutation-invariant quantum codes in the symmetric subspace $ \Sym^n(\mathbb C^q) $ of \(n\) qudits of local dimension \(q\). For every integer \(q\ge 2\), we
construct a permutation-invariant code with parameters $ ((4,q,2))_q. $
Thus four physical qudits suffice to encode one logical qudit with distance
two in the symmetric sector for every local dimension. We also show, using
linear-programming constraints for permutation-invariant quantum codes, that no
permutation-invariant code of dimension \(q\) and distance at least \(2\) exists
in \(\Sym^n(\mathbb C^q)\) for \(n\le 3\). Hence four qudits are necessary and
sufficient.

The construction has a simple representation-theoretic and combinatorial
description. In the irreducible \(\SU(q)\)-module \(\Sym^4(\mathbb C^q)\), the
distance-two Knill--Laflamme conditions split into root and Cartan parts. By
restricting supports to the even-entry occupation layer, all root-error
conditions vanish automatically. The remaining Cartan conditions reduce to
linear balancing constraints on packets of occupation vectors. These packets
admit a natural graph-theoretic interpretation in terms of the vertices and
edges of the complete graph \(K_q\): for odd \(q\), they are organized by the
midpoint rule, while for even \(q\), they are organized by a decomposition of
\(K_q\) into perfect matchings. In this way, the existence of minimal
\(((4,q,2))_q\) permutation-invariant codes is reduced to a parity-dependent
edge-coloring problem on \(K_q\).
\end{abstract}

\tableofcontents

\section{Introduction}

Permutation-invariant quantum codes form a natural and highly structured class
of quantum error-correcting codes.  Their ambient Hilbert space is not the full
tensor product $(\CC^q)^{\otimes n} $
but rather the symmetric subspace $ \Sym^n(\CC^q), $ consisting of those \(n\)-qudit states invariant under permutations of the
tensor factors. This symmetry makes permutation-invariant codes naturally robust against
permutation errors and closely connects them to deletion errors, since deletion
of unknown positions is equivalent to erasure of fixed positions once the state
is symmetric \cite{PI2}.

The subject began with early symmetric-subspace constructions and the first
systematic group-theoretic treatment of permutation-invariant codes, including explicit
single-error-correcting examples and families of non-additive codes
\cite{RuskaiPRL,2004permutation}.  Later work introduced scalable families of
permutation-invariant codes built from Dicke-state superpositions, extended these constructions to
multiple logical qubits and to qudits, and developed polynomial methods for
high-dimensional symmetric spaces \cite{PI2,PImore,PIqudit1}.

A parallel line of work connected permutation-invariant symmetry with physically motivated noise
models.  In particular, constant-excitation and bosonic variants were developed
for amplitude-damping noise \cite{OuyangChao}, while the relation between PI
codes and quantum deletion channels was made explicit in subsequent work, which
also gave efficient encoding and decoding procedures for certain PI families
\cite{PIdelete,ShibayamaHagiwara}.

More recent work has highlighted additional structure and applications of permutation-invariant
qubit codes.  Certain families admit nontrivial transversal logical gates
\cite{us2}.  New explicit constructions improved known parameter tradeoffs
\cite{PI3}.  Other work has constructed permutation-invariant qubit
codes using Tverberg partitions \cite{BumgardnerWstarMetricThesis,ShorsLieDetectionREU,XuClassicalConstructionREU,ConvexPI}. Recent work has also established the first complete general theory of error correction for permutation invariant codes, giving efficient symmetric group based algorithms that can correct any correctable error  \cite{ouyangPIerrorcorrection}.

Another recent work constructs qudit codes inside symmetric
power representations \(\Sym^N(\CC^d)\), using Heisenberg--Weyl covariance and
Knill--Laflamme reductions for collective \(\mathfrak{su}(d)\) errors
\cite{UyGangloff}.  Our focus is complementary: we study ordinary
permutation-invariant multiqudit codes in \(\Sym^n(\CC^q)\) with the standard
distance-two condition for arbitrary single-site errors, and determine the
minimal block length for encoding one logical qudit.

Our first main result is the existence of a distance-two permutation-invariant
encoding of one logical qudit into four physical qudits for every local
dimension \(q\).  More precisely, Theorems~\ref{thm:general-family} and
\ref{thm:even-family} construct a code
\[
    \mathcal C\subseteq \Sym^4(\CC^q)
\]
with parameters \(((4,q,2))_q\) for every \(q\ge 2\), treating odd and even
\(q\) separately.

For odd \(q\), the construction is especially explicit.  The logical basis is
generated by cyclic translates
\[
    \ket{\mathbf r}=(X^{\otimes 4})^r\ket{\mathbf 0},
    \qquad r\in\mathbb Z_q,
\]
where
\[
    \ket{\mathbf 0}
    =
    \frac{1}{\sqrt q}\ket{0000}
    +
    \frac{1}{\sqrt{3q}}
    \sum_{i=1}^{(q-1)/2}
    \ket{\overline{i\,i\,q-i\,q-i}}.
\]
Moreover, this odd-\(q\) family has a transversal logical Pauli action:
\(X^{\otimes4}\) acts as the logical shift and \(Z^{\otimes4}\) acts as
\(\overline Z^{\,4}\); see Proposition~\ref{prop:odd-transversal-pauli}.

Our second main result is minimality.  Theorem~\ref{thm:no-small-length}
shows that no permutation-invariant code of dimension \(q\) and distance at
least \(2\) exists in \(\Sym^n(\CC^q)\) for \(n\le 3\).  Together with the
construction, this proves that four physical qudits are necessary and
sufficient for a distance-two permutation-invariant encoding of one logical
qudit of local dimension \(q\).

Beyond the existence result, the construction is useful because it isolates a
simple representation-theoretic mechanism: in \(\Sym^4(\CC^q)\), the
distance-two Knill--Laflamme conditions split into root conditions, controlled
by support separation, and Cartan conditions, controlled by linear balancing. The symmetric subspace
\[
    \Sym^4(\CC^q)
\]
is the irreducible \(\SU(q)\)-module of highest weight \((4,0,\dots,0)\), and
its natural basis is indexed by occupation vectors, equivalently by weights.
Relative to the standard root-space decomposition of \(\mathfrak{sl}_q\), the
distance-two Knill--Laflamme conditions split into two types.

First, the root operators \(E_{ab}\) move weight by a single root.  By
restricting supports to the even-entry occupation layer, all root-error
matrix elements vanish automatically.  For \(n=4\), this layer consists
exactly of the occupation vectors
\[
    4e_i
    \qquad\text{and}\qquad
    2e_i+2e_j,
\]
which admit a natural graph-theoretic interpretation as the vertices and edges
of the complete graph \(K_q\).

Second, the Cartan operators act diagonally in the weight basis.  Once the
supports of the logical basis states are chosen to be disjoint, the remaining
Knill--Laflamme conditions reduce to diagonal expectation constraints.  In
graph-theoretic terms, one must organize the vertices and edges of \(K_q\) into
\(q\) disjoint packets satisfying linear balancing conditions determined by the
Cartan subalgebra.

The parity of \(q\) determines how this combinatorial organization is achieved.
For odd \(q\), the edges of \(K_q\) are grouped by the midpoint rule
\[
    \{i,j\}\mapsto \frac{i+j}{2}\pmod q,
\]
which is well-defined because \(2\) is invertible modulo \(q\).  This produces
a matching on the \(q-1\) vertices other than the midpoint, together with the
omitted vertex itself.  For even \(q\), the midpoint rule fails, and one
instead uses a decomposition of \(E(K_q)\) into \(q-1\) perfect matchings,
equivalently, a \(1\)-factorization of \(K_q\), together with a separate vertex
packet.  In both cases, the packets lie entirely in the even-entry layer, so
root separation is automatic and the construction reduces to Cartan balance.

Thus the existence of minimal \(((4,q,2))_q\) permutation-invariant codes is
governed by a parity-dependent edge-coloring problem on the complete graph
\(K_q\).  From the coding-theoretic point of view, this yields a uniform
family of optimal block-length constructions.
From the representation-theoretic point of view, it provides a concrete design
principle: root errors are controlled combinatorially by support separation,
while Cartan errors are controlled by linear balancing conditions on the
squared amplitudes.

\section{Preliminaries}
\label{sec:preliminaries}

Let $ \H := (\CC^q)^{\otimes n} $ be the Hilbert space of \(n\) qudits of local dimension \(q\).  The symmetric
group \(S_n\) acts on \(\H\) by permuting tensor factors, and the
permutation-invariant subspace is $ V:=\Sym^n(\CC^q)\subseteq \H. $ A \bfit{permutation-invariant quantum code} is a subspace $  \C\subseteq V. $ We say that \(\C\) has parameters \(((n,K,d))_q\) if it has
dimension \(K\) and distance \(d\).  In this paper we are primarily interested
in the case $ K=q $ and $ d=2 $. That is, permutation-invariant encodings of one logical qudit with distance
two. The dimension of the symmetric subspace is
\[
    \dim V=\dim \Sym^n(\CC^q)=\binom{n+q-1}{n}. \numberthis
\]

\subsection{Distance-two Knill--Laflamme conditions}

Let \(P\) denote the orthogonal projector onto the code \(\C\). A code has distance at least \(2\) if and only if it detects all single-site
errors.  In the permutation-invariant setting, these conditions may be checked
using collective traceless one-body operators.

For \(A\in\L(\CC^q)\), define its collective action on \(n\) qudits by
\[
    A^{(n)}
    :=
    \sum_{r=1}^n
    I^{\otimes(r-1)}\otimes A\otimes I^{\otimes(n-r)}. \numberthis
\]
We will apply this construction to
\[
    \mathfrak{sl}_q\subset \L(\CC^q), \numberthis
\]
viewed as the traceless linear operators on \(\CC^q\).

\begin{lemma}
\label{lem:collective-errors-suffice}
Let \(\C\subseteq \Sym^n(\CC^q)\) be a permutation-invariant code.  Then
\(\C\) detects all single-site errors if and only if
\[
    P A^{(n)}P=\lambda_A P
    \qquad
    \text{for all }A\in\mathfrak{sl}_q \numberthis
\]
for some scalars \(\lambda_A\in\CC\).
\end{lemma}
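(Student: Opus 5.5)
Detecting all single-site errors means, by the Knill--Laflamme criterion, that $P E P = c_E P$ for every operator $E$ acting nontrivially on at most one tensor factor. Such operators are spanned by the identity together with the operators $E_r(A) := I^{\otimes(r-1)}\otimes A\otimes I^{\otimes(n-r)}$ with $A\in\mathfrak{sl}_q$ and $1\le r\le n$. This holds because $\L(\CC^q)=\CC I\oplus \mathfrak{sl}_q$, and the condition is linear in $E$. So it suffices to compare the family of conditions $P E_r(A) P = c_{r,A} P$ with the collective conditions $P A^{(n)} P=\lambda_A P$.

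The key tool is permutation invariance. Let $W_\sigma$ denote the unitary permuting tensor factors by $\sigma\in S_n$. Since $\C\subseteq\Sym^n(\CC^q)$, every vector in $\C$ is fixed by each $W_\sigma$, so $W_\sigma P = P = P W_\sigma^\dagger$. Moreover, conjugation moves one site to another: $W_\sigma E_r(A) W_\sigma^\dagger = E_{\sigma(r)}(A)$. Combining these gives
\[
P E_{\sigma(r)}(A) P = P W_\sigma E_r(A) W_\sigma^\dagger P = P E_r(A) P,
\]
so $P E_r(A) P$ is independent of $r$. Summing over $r$ then yields
\[
P A^{(n)} P = n\, P E_1(A) P .
\]

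The two directions follow directly. If $\C$ detects single-site errors, then $P E_1(A) P = c_A P$, and hence $P A^{(n)} P = n c_A P$; take $\lambda_A=n c_A$. Conversely, if $P A^{(n)}P=\lambda_A P$, then for every $r$ we get $P E_r(A)P = P E_1(A)P = (\lambda_A/n) P$. The identity component is trivially a scalar. Linearity in $E$ then gives the Knill--Laflamme condition for all single-site errors.

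The only delicate point is the equality $W_\sigma P = P$. It needs care because $W_\sigma$ acts on all of $\H$, whereas $P$ projects onto a subspace of the symmetric subspace. The equality holds because $P=\sum_k \ket{c_k}\bra{c_k}$ for an orthonormal basis of $\C$, with each $W_\sigma\ket{c_k}=\ket{c_k}$. The same argument applied to the bras gives $P W_\sigma^\dagger=P$. Everything else is routine linearity.
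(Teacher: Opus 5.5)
Your proof is correct and follows the same route as the paper: permutation invariance makes the single-site terms independent of the site, so $P A^{[r]} P = \frac{1}{n} P A^{(n)} P$, and the identity component of a single-site error contributes only a scalar. The paper argues with matrix elements $\bra{\psi}A^{[r]}\ket{\phi}$ rather than at the operator level via $W_\sigma P = P$, and you spell out both directions of the equivalence more explicitly, but the content is the same.
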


\begin{proof}
Let
\[
    A^{[r]}
    :=
    I^{\otimes(r-1)}\otimes A\otimes I^{\otimes(n-r)} \numberthis
\]
denote the operator \(A\) acting on the \(r\)-th tensor factor.  If
\(\ket{\psi},\ket{\phi}\in\Sym^n(\CC^q)\), then
\[
    \bra{\psi}A^{[r]}\ket{\phi} \numberthis
\]
is independent of \(r\), because both vectors are invariant under
permutations of the tensor factors.  Hence
\[
    \bra{\psi}A^{[r]}\ket{\phi}
    =
    \frac1n
    \bra{\psi}A^{(n)}\ket{\phi}. \numberthis
\]
Thus the Knill--Laflamme matrix elements for single-site traceless errors are
equivalent to the Knill--Laflamme matrix elements for the collective operators
\(A^{(n)}\).

The identity component of a single-site operator contributes only a scalar
multiple of the identity on the code.  Therefore it suffices to check
traceless single-site operators, i.e.\ \(A\in\mathfrak{sl}_q\).
\end{proof}

By Lemma~\ref{lem:collective-errors-suffice}, the code
\(\C\subseteq V\) has distance at least \(2\) if and only if
\begin{equation}
\label{eq:KL-distance-two-linear}
    P A^{(n)}P=\lambda_A P
    \qquad
    \text{for all }A\in\mathfrak{sl}_q,
\end{equation}
for some scalars \(\lambda_A\in\CC\).  Equivalently, for any orthonormal basis
\(\{\ket{\psi_i}\}\) of \(\C\),
\begin{equation}
\label{eq:KL-matrix-elements}
    \bra{\psi_i}A^{(n)}\ket{\psi_j}
    =
    \lambda_A\delta_{ij}
    \qquad
    \text{for all }A\in\mathfrak{sl}_q.
\end{equation}

We verify these conditions using the standard Cartan--Weyl basis of
\(\mathfrak{sl}_q\).  For \(a\neq b\), let \(E_{ab}\) denote the matrix unit
in \(\L(\CC^q)\) sending \(\ket b\mapsto \ket a\).  These are the
\bfit{root operators}.  For the Cartan subalgebra, we use the diagonal
traceless basis
\[
    H_i:=E_{i-1,i-1}-E_{i,i},
    \qquad
    1\le i\le q-1. \numberthis
\]
We refer to the \(H_i\) as the \bfit{Cartan operators}.  When these one-body
operators act on \(V=\Sym^n(\CC^q)\), we use the same symbols for their
collective actions.

Accordingly, the distance-two conditions split into two types: those involving
root operators, which move between weights, and those involving Cartan
operators, which act diagonally.

\subsection{Occupation-number basis and weights}

An orthonormal basis of \(V=\Sym^n(\CC^q)\) is given by the occupation-number
states
\[
    \ket{a_0,\dots,a_{q-1}},
    \qquad
    a_0+\cdots+a_{q-1}=n, \numberthis
\]
defined as the normalized symmetrization of a computational basis word
containing \(a_j\) copies of the symbol \(j\).

For example, when \(q=3\) and \(n=4\),
\[
    \ket{2,0,2}
    =
    \frac{1}{\sqrt6}\ket{\overline{0022}}, \numberthis
\]
where \(\ket{\overline{0022}}\) denotes the unnormalized sum of the six
distinct permutations of the word \(0022\).  More generally, throughout the
paper, an overline denotes the unnormalized sum over all distinct
permutations of the indicated word.

These occupation-number states are weight vectors for the Cartan subalgebra.

\begin{lemma}
\label{lem:occupation-weight-dictionary}
Let
\[
    \ket{a_0,\dots,a_{q-1}}\in \Sym^n(\CC^q). \numberthis
\]
Then its weight with respect to the Cartan basis
\[
    H_i=E_{i-1,i-1}-E_{i,i},
    \qquad
    1\le i\le q-1, \numberthis
\]
is
\[
    (a_0-a_1,\;a_1-a_2,\;\dots,\;a_{q-2}-a_{q-1}). \numberthis
\]
Equivalently,
\[
    H_i\ket{a_0,\dots,a_{q-1}}
    =
    (a_{i-1}-a_i)\ket{a_0,\dots,a_{q-1}}
    \qquad
    (1\le i\le q-1), \numberthis
\]
where \(H_i\) denotes the collective action on \(\Sym^n(\CC^q)\).
\end{lemma}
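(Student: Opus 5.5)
The plan is to compute the action of a single diagonal matrix unit on a computational basis word, then sum over tensor positions and symmetrize. First, for a fixed $j$, the one-body operator $E_{jj}$ acts on a basis ket $\ket{k}$ of $\CC^q$ by $E_{jj}\ket{k}=\delta_{jk}\ket{k}$. Its collective action $E_{jj}^{(n)}=\sum_{r=1}^n E_{jj}^{[r]}$ on a computational basis word $\ket{w_1\cdots w_n}$ therefore returns the same word multiplied by $\#\{r: w_r=j\}$, the number of occurrences of the symbol $j$ in the word. So every computational basis word is an eigenvector of $E_{jj}^{(n)}$, with eigenvalue its occupation number in slot $j$.

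Next, I will observe that this occupation count depends only on the multiset of symbols, not on their order. Hence all distinct permutations of a word with occupation vector $(a_0,\dots,a_{q-1})$ share the eigenvalue $a_j$. By linearity, their (normalized) sum $\ket{a_0,\dots,a_{q-1}}$ satisfies
\[
    E_{jj}^{(n)}\ket{a_0,\dots,a_{q-1}}=a_j\ket{a_0,\dots,a_{q-1}}.
\]
Since the map $A\mapsto A^{(n)}$ is linear, $H_i^{(n)}=E_{i-1,i-1}^{(n)}-E_{i,i}^{(n)}$, and the eigenvalue of $H_i$ on this state is $a_{i-1}-a_i$. This is the claimed formula. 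Listing these eigenvalues for $i=1,\dots,q-1$ gives the stated weight vector, since by definition the weight records the eigenvalues under the chosen Cartan basis.

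There is no real obstacle here; the only point requiring care is bookkeeping of indices. The labels run from $0$ to $q-1$, while $H_i$ pairs slots $i-1$ and $i$. I will also note explicitly that the normalization constant of the symmetrized state is irrelevant, because each summand is an eigenvector with the same eigenvalue.
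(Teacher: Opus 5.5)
Your proof is correct and follows essentially the same route as the paper: the collective action of a diagonal one-body operator on a computational word counts symbol occurrences, and this count is invariant under the permutations that build the occupation state. Splitting $H_i$ as $E_{i-1,i-1}-E_{i,i}$ and using linearity of $A\mapsto A^{(n)}$ is just a slightly more explicit version of the paper's direct count of the occurrences of $i-1$ minus those of $i$.
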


\begin{proof}
The one-body operator \(H_i\) acts on \(\CC^q\) as \(+1\) on \(\ket{i-1}\),
as \(-1\) on \(\ket i\), and as \(0\) on all other computational basis
vectors.  Therefore its collective action on an occupation-number state counts
the number of occurrences of \(i-1\) minus the number of occurrences of \(i\).
This gives the eigenvalue \(a_{i-1}-a_i\).
\end{proof}

In particular, for \(q=3\), the occupation-number state
\(\ket{a_0,a_1,a_2}\) has weight
\[
    (a_0-a_1,\;a_1-a_2). \numberthis
\]

Conversely, when \(q=3\), a weight \((\lambda_1,\lambda_2)\) in
\(\Sym^n(\CC^3)\) corresponds to the occupation numbers
\[
    a_0=\frac{n+2\lambda_1+\lambda_2}{3},
    \;
    a_1=\frac{n-\lambda_1+\lambda_2}{3},
    \;
    a_2=\frac{n-\lambda_1-2\lambda_2}{3}. \numberthis
\]
For \(a\neq b\), the collective root operator \(E_{ab}\) acts by moving one
unit of occupation from \(b\) to \(a\):
\[
    E_{ab}\ket{a_0,\dots,a_{q-1}}
    \propto
    \ket{a_0,\dots,a_a+1,\dots,a_b-1,\dots,a_{q-1}},
    \numberthis
\]
whenever \(a_b>0\), and annihilates the state otherwise.  Thus root operators
change occupation by a single elementary move, and hence connect weights
differing by a single root.

Given a permutation-invariant state of $ n $ qudits of local dimension $ q $ 
\[
    \ket{\psi}
    =
    \sum_{a_0+\cdots+a_{q-1}=n}
    c_{a_0,\dots,a_{q-1}}
    \ket{a_0,\dots,a_{q-1}}, \numberthis
\]
we define the \bfit{support} of \(\ket{\psi}\) by
\[
    \supp\ket{\psi}
    :=
    \left\{
    (a_0,\dots,a_{q-1})
    :
    c_{a_0,\dots,a_{q-1}}\neq 0
    \right\}. \numberthis
\]
Equivalently, via Lemma~\ref{lem:occupation-weight-dictionary}, we may regard
the support as a subset of the weight lattice of \(V\).  This is the point of
view used in the next section.

\subsection{Cyclic shift symmetry}

Let \(X\) denote the cyclic shift on the computational basis of \(\CC^q\):
\[
    X\ket a=\ket{a+1\!\!\!\pmod q}. \numberthis
\]
The operator \(X^{\otimes n}\) acts by adding \(1\bmod q\) to each tensor
factor.  Consequently, on occupation-number states it cyclically permutes the
entries:
\[
    X^{\otimes n}\ket{a_0,a_1,\dots,a_{q-1}}
    =
    \ket{a_{q-1},a_0,\dots,a_{q-2}}. \numberthis
\]

This symmetry will be used in the odd-\(q\) construction to generate all
logical basis vectors from a single seed vector.  It also interacts well with
the Cartan subalgebra.

\begin{lemma}
\label{lem:X-normalizes-Cartan}
Let \(\mathfrak h=\mathrm{span}\{H_1,\dots,H_{q-1}\}\) be the Cartan
subalgebra of diagonal traceless matrices.  Then
\[
    (X^{\otimes n})^\dagger \mathfrak h X^{\otimes n}
    =
    \mathfrak h.
\]
\end{lemma}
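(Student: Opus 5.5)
The plan is to reduce the statement to a one-body computation and then lift it to the collective action. Since $X$ is a permutation matrix on $\CC^q$, it is unitary, and conjugation by it sends matrix units to matrix units. Concretely, $X^\dagger E_{aa} X = E_{a-1,a-1}$ with indices taken mod $q$: applying $X^\dagger E_{aa}X$ to $\ket b$ gives $X^\dagger E_{aa}\ket{b+1}$, which equals $\ket{a-1}$ if $b=a-1$ and $0$ otherwise. Hence conjugation by $X$ permutes the diagonal matrix units cyclically. It therefore maps the space of diagonal matrices onto itself and preserves the trace. So at the one-body level, $X^\dagger \mathfrak h X=\mathfrak h$, since $\mathfrak h$ is exactly the diagonal traceless matrices. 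Explicitly, $X^\dagger H_i X = E_{i-2,i-2}-E_{i-1,i-1}$, indices mod $q$. This is $H_{i-1}$ for $i\ge 2$. For $i=1$ it is $E_{q-1,q-1}-E_{00} = -(H_1+\cdots+H_{q-1})$, by telescoping.

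Next I lift to $n$ qudits. The key identity is
\[
(X^{\otimes n})^\dagger A^{(n)} X^{\otimes n} = (X^\dagger A X)^{(n)}.
\]
It holds because each summand $I^{\otimes(r-1)}\otimes A\otimes I^{\otimes(n-r)}$ conjugates factorwise, and $X^\dagger I X=I$. The map $A\mapsto A^{(n)}$ is linear, so the collective Cartan subalgebra $\mathrm{span}\{H_i^{(n)}\}$ is sent into itself. The conjugation is invertible, with inverse given by conjugation by $X^{\otimes n}$, which likewise preserves the span. So the image is the whole subalgebra, giving equality rather than just inclusion.

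Alternatively, one can argue directly in the occupation basis. The cyclic-shift formula shows that $X^{\otimes n}$ permutes occupation-number states. Lemma~\ref{lem:occupation-weight-dictionary} shows that collective Cartan operators are diagonal in this basis with eigenvalues linear in the $a_j$. A conjugated diagonal operator is then again diagonal, with the occupation entries relabelled cyclically.

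The only step needing care is the boundary case $i=1$, where the wraparound gives a sum of all $H_j$ rather than a single $H_j$. The telescoping identity above handles it. Everything else is routine.
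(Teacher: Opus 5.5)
Your proof is correct and uses the same idea as the paper's: $X$ is a permutation matrix, so conjugating by it permutes the diagonal matrix units and preserves tracelessness. Your lifting identity $(X^{\otimes n})^\dagger A^{(n)} X^{\otimes n}=(X^\dagger A X)^{(n)}$ and your invertibility argument for equality make explicit two steps the paper leaves implicit: the paper only notes that the conjugated $n$-body operator is diagonal and traceless, which by itself does not place it in the collective $\mathfrak h$, and it shows only inclusion, not equality.
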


\begin{proof}
The operator \(X\) is a permutation matrix in the computational basis.
Conjugating a diagonal matrix by a permutation matrix again gives a diagonal
matrix, and trace is preserved under conjugation.  Hence, for every
\(H\in\mathfrak h\),
\[
    (X^{\otimes n})^\dagger H X^{\otimes n}
\]
is again diagonal and traceless.  Therefore it lies in \(\mathfrak h\).
\end{proof}

Thus, once the diagonal Knill--Laflamme conditions are verified for one
logical basis vector, they may be transported to its cyclic translates.

\section{A root--weight separation principle}
\label{sec:root-separation}

In this section we formulate a general principle for constructing
distance-two permutation-invariant codes.  The key observation is that the
action of \(\mathfrak{sl}_q\) decomposes into root operators, which move
between weights, and Cartan operators, which act diagonally.  By selecting
supports in the weight lattice that are sufficiently separated with respect
to roots, all root-error Knill--Laflamme conditions are automatically
satisfied.  The remaining Cartan conditions reduce to elementary linear
balancing constraints on the squared moduli of the coefficients. Related support-selection methods in the context of quantum metric spaces were
studied in~\cite{BumgardnerWstarMetricThesis,ShorsLieDetectionREU,XuClassicalConstructionREU}.

\subsection{Weights and root adjacency}

Let
\[
V=\Sym^n(\CC^q)
\]
be viewed as a representation of \(\mathfrak{sl}_q\).  The occupation-number
basis
\[
\ket{a_0,\dots,a_{q-1}},
\qquad
a_0+\cdots+a_{q-1}=n,
\]
is a weight basis for the standard Cartan subalgebra \(\mathfrak h\) of
diagonal traceless matrices.

For \(a\neq b\), let \(E_{ab}\) denote the root operator which sends
\(\ket b\) to \(\ket a\).  On occupation-number states, the collective action
of \(E_{ab}\) moves one unit of occupation from \(b\) to \(a\):
\[
E_{ab}^{(n)}\ket{a_0,\dots,a_{q-1}}
\propto
\ket{a_0,\dots,a_a+1,\dots,a_b-1,\dots,a_{q-1}},
\]
whenever \(a_b>0\), and annihilates the state otherwise.  Thus a root
operator connects two weights precisely when their difference is a root.

\begin{definition}
Two weights of \(V\) are said to be \emph{root-adjacent} if their difference
is a root of \(\mathfrak{sl}_q\).
\end{definition}

Equivalently, two occupation vectors are root-adjacent if one can be obtained
from the other by moving one unit of occupation from one coordinate to
another.  Thus the weight set of \(V\) carries a natural graph structure, with
edges corresponding to root operators.

\subsection{Root-separated supports}

\begin{definition}
Let \(\mathcal S\) and \(\mathcal T\) be subsets of the weight set of \(V\).

\begin{enumerate}
\item The set \(\mathcal S\) is \emph{root-separated} if no two distinct
weights in \(\mathcal S\) are root-adjacent.

\item The sets \(\mathcal S\) and \(\mathcal T\) are \emph{mutually
root-separated} if no weight in \(\mathcal S\) is root-adjacent to any weight
in \(\mathcal T\).
\end{enumerate}
\end{definition}

Thus a root-separated set contains no edge of the weight graph, and mutually
root-separated sets have no root edge between them.

\subsection{Vanishing of root-error matrix elements}

The basic consequence of root separation is the following.

\begin{proposition}
\label{prop:root-vanishing-weight}
Let
\[
\ket{\psi}=\sum_{\lambda\in\mathcal S}c_\lambda\ket{\lambda},
\qquad
\ket{\phi}=\sum_{\mu\in\mathcal T}d_\mu\ket{\mu}
\]
be states in \(V\).

\begin{enumerate}
\item If \(\mathcal S\) is root-separated, then root operators satisfy the diagonal Knill-Laflamme conditions, that is,
\[
\bra{\psi}E_\alpha\ket{\psi}=0
\qquad
\text{for every root operator } E_\alpha.
\]

\item If \(\mathcal S\) and \(\mathcal T\) are mutually root-separated, then root operators satisfy the off-diagonal Knill-Laflamme conditions, that is,
\[
\bra{\psi}E_\alpha\ket{\phi}=0
\qquad
\text{for every root operator } E_\alpha.
\]
\end{enumerate}
\end{proposition}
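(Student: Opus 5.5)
The plan is to expand both matrix elements bilinearly in the weight basis and show that every surviving term vanishes. Fix a root operator \(E_\alpha=E_{ab}\) with \(a\neq b\), acting collectively on \(V\). By linearity,
\[
\bra{\psi}E_{ab}\ket{\phi}=\sum_{\lambda\in\mathcal S}\sum_{\mu\in\mathcal T}\overline{c_\lambda}\,d_\mu\,\bra{\lambda}E_{ab}\ket{\mu},
\]
and for part (1) we take \(\ket{\phi}=\ket{\psi}\) and \(\mathcal T=\mathcal S\). So it suffices to show that each individual term \(\bra{\lambda}E_{ab}\ket{\mu}\) is zero under the stated hypotheses.

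The key step is a weight computation. By the action of collective root operators on occupation-number states recorded in the preliminaries, \(E_{ab}\ket{\mu}\) is either \(0\) or a scalar multiple of the occupation state \(\ket{\mu+e_a-e_b}\). One can see this concretely, or abstractly: since \([H,E_{ab}]=\alpha(H)E_{ab}\) for \(H\in\mathfrak h\), the operator \(E_{ab}\) shifts weights by the root \(\alpha\). The occupation basis is orthonormal, and by Lemma~\ref{lem:occupation-weight-dictionary} distinct occupation vectors with fixed total \(n\) have distinct weights. Hence \(\bra{\lambda}E_{ab}\ket{\mu}\neq 0\) forces \(\lambda=\mu+e_a-e_b\), which means \(\lambda\) and \(\mu\) are root-adjacent and in particular distinct, because \(\alpha\neq 0\).

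Now the two parts follow. For (1), any nonzero term would require two distinct root-adjacent weights \(\lambda,\mu\in\mathcal S\), contradicting root-separation of \(\mathcal S\). The diagonal case \(\lambda=\mu\) never contributes, since a root is nonzero. For (2), a nonzero term would give \(\lambda\in\mathcal S\) root-adjacent to \(\mu\in\mathcal T\), contradicting mutual root-separation. In both cases the whole sum is \(0\).

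The only point needing care is the claim that distinct occupation vectors give distinct weights. The weight map \(a\mapsto(a_{i-1}-a_i)_i\) is injective on vectors with fixed sum \(n\): the differences determine all \(a_i\) up to a common shift, and the fixed sum pins that shift down. Alternatively, the argument can be run entirely at the level of occupation vectors using the explicit action of \(E_{ab}\), which sidesteps this point. I expect no substantive obstacle beyond this bookkeeping.
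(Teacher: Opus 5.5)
Your proof is correct and follows the paper's argument: expand $\bra{\psi}E_\alpha\ket{\phi}$ bilinearly in the weight basis, note that $\bra{\lambda}E_\alpha\ket{\mu}\neq 0$ forces $\lambda=\mu+\alpha$ (root-adjacency), and conclude that the separation hypotheses kill every term, with part (1) as the special case $\mathcal T=\mathcal S$. Your extra remarks on the diagonal terms $\lambda=\mu$ and on the injectivity of the weight map are sound bookkeeping that the paper leaves implicit.
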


\begin{proof}
Expanding in the weight basis gives
\[
\bra{\psi}E_\alpha\ket{\phi}
=
\sum_{\lambda\in\mathcal S}
\sum_{\mu\in\mathcal T}
\overline{c_\lambda}d_\mu
\bra{\lambda}E_\alpha\ket{\mu}.
\]
The matrix element \(\bra{\lambda}E_\alpha\ket{\mu}\) can be nonzero only if
\[
\lambda=\mu+\alpha,
\]
that is, only if \(\lambda\) and \(\mu\) are root-adjacent.  The stated
root-separation assumptions exclude all such pairs, so the sum vanishes.
The first assertion is the special case \(\mathcal T=\mathcal S\) and
\(\ket{\phi}=\ket{\psi}\).
\end{proof}

\subsection{A support-separation criterion}

We now combine the preceding root-vanishing observation with the diagonal
action of the Cartan subalgebra.

Let
\[
\ket{\psi_0},\dots,\ket{\psi_{K-1}}\in V
\]
be an orthonormal family, and let
\[
\mathcal S_i:=\supp\ket{\psi_i}.
\]
If the supports are pairwise disjoint, then Cartan operators have no
off-diagonal matrix elements between distinct codewords, because Cartan
operators are diagonal in the weight basis.  Hence, after root separation has
removed the root-error matrix elements, the only remaining condition is that
the diagonal Cartan expectations be independent of the logical basis state.

\begin{theorem}[Support-separation criterion]
\label{thm:support-separation}
Let
\[
\ket{\psi_0},\dots,\ket{\psi_{K-1}}\in \Sym^n(\CC^q)
\]
be an orthonormal family of states, and let
\[
\mathcal S_i:=\supp\ket{\psi_i}.
\]
Suppose that the following conditions hold.

\begin{enumerate}
\item The supports \(\mathcal S_0,\dots,\mathcal S_{K-1}\) are pairwise
disjoint.

\item Each support \(\mathcal S_i\) is root-separated.

\item The supports \(\mathcal S_i\) and \(\mathcal S_j\) are mutually
root-separated whenever \(i\neq j\).

\item For every \(H\in\mathfrak h\), the scalar
\[
\bra{\psi_i}H\ket{\psi_i}
\]
is independent of \(i\).
\end{enumerate}

Then
\[
\mathcal C:=\mathrm{span}\{\ket{\psi_0},\dots,\ket{\psi_{K-1}}\}
\]
detects all traceless one-body errors.  In particular, \(\mathcal C\) has
distance at least \(2\).
\end{theorem}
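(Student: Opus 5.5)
By Lemma~\ref{lem:collective-errors-suffice}, it suffices to verify the matrix-element form \eqref{eq:KL-matrix-elements}. That is, we must show that for every $A\in\mathfrak{sl}_q$ there is a scalar $\lambda_A$ with $\bra{\psi_i}A\ket{\psi_j}=\lambda_A\delta_{ij}$, where $A$ denotes the collective action on $\Sym^n(\CC^q)$. Since $A\mapsto\bra{\psi_i}A\ket{\psi_j}$ is linear in $A$, it is enough to check the condition on a spanning set of $\mathfrak{sl}_q$ and then define $\lambda_A$ by linearity. We take the Cartan--Weyl spanning set $\{E_{ab}:a\neq b\}\cup\mathfrak h$.

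\textbf{Root operators.} Fix $E_{ab}$ with $a\neq b$.
\begin{itemize}
\item For the diagonal elements, hypothesis~(2) and Proposition~\ref{prop:root-vanishing-weight}(1) give $\bra{\psi_i}E_{ab}\ket{\psi_i}=0$.
\item For the off-diagonal elements with $i\neq j$, hypothesis~(3) and Proposition~\ref{prop:root-vanishing-weight}(2) give $\bra{\psi_i}E_{ab}\ket{\psi_j}=0$.
\end{itemize}
Hence the condition holds with $\lambda_{E_{ab}}=0$.

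\textbf{Cartan operators.} Fix $H\in\mathfrak h$. By Lemma~\ref{lem:occupation-weight-dictionary}, $H$ is diagonal in the occupation basis, acting on $\ket{\mu}$ by a scalar $\mu(H)$. So for $i\neq j$,
\[
\bra{\psi_i}H\ket{\psi_j}=\sum_{\mu\in\mathcal S_i\cap\mathcal S_j}\overline{c^{(i)}_\mu}\,c^{(j)}_\mu\,\mu(H)=0,
\]
because hypothesis~(1) makes $\mathcal S_i\cap\mathcal S_j$ empty. For the diagonal elements, hypothesis~(4) says $\bra{\psi_i}H\ket{\psi_i}$ equals a common value, which we call $\lambda_H$. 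This value is linear in $H$ as the expectation of a fixed state.

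\textbf{Extending to all of $\mathfrak{sl}_q$.} A general $A\in\mathfrak{sl}_q$ decomposes as $A=\sum_{a\neq b}x_{ab}E_{ab}+H$ with $H\in\mathfrak h$. Linearity then gives
\[
\bra{\psi_i}A\ket{\psi_j}=\lambda_H\,\delta_{ij}.
\]
This is exactly \eqref{eq:KL-matrix-elements} with $\lambda_A=\lambda_H$. Orthonormality of the $\ket{\psi_i}$ gives $P=\sum_i\ket{\psi_i}\bra{\psi_i}$, so this is equivalent to $PA P=\lambda_A P$. Lemma~\ref{lem:collective-errors-suffice} therefore yields detection of all single-site errors, i.e.\ distance at least $2$.

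The only point needing care is to make sure that the root operators together with $\mathfrak h$ actually span $\mathfrak{sl}_q$ and that the well-definedness of $\lambda_A$ is handled correctly. The decomposition above is unique, so defining $\lambda_A$ through it causes no ambiguity. There is no genuine obstacle; the proof is a bookkeeping argument assembling Proposition~\ref{prop:root-vanishing-weight} with the diagonal action of the Cartan subalgebra.
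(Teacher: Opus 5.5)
Your proposal is correct and follows essentially the same route as the paper: you check the Knill--Laflamme conditions on the Cartan--Weyl basis, with root matrix elements vanishing by Proposition~\ref{prop:root-vanishing-weight}, Cartan off-diagonal elements vanishing by disjoint supports, and Cartan diagonal elements constant by hypothesis~(4), and then extend to all of $\mathfrak{sl}_q$ by linearity. Your extra remarks on the linearity of $\lambda_H$ and the uniqueness of the root-space decomposition just make explicit what the paper's final spanning step leaves implicit.
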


\begin{proof}
Let \(P\) be the orthogonal projector onto \(\mathcal C\).  It is enough to
verify the Knill--Laflamme conditions for the Cartan--Weyl basis of
\(\mathfrak{sl}_q\).

First let \(E_\alpha\) be a root operator.  By
Proposition~\ref{prop:root-vanishing-weight} and the root-separation
assumptions,
\[
\bra{\psi_i}E_\alpha\ket{\psi_j}=0
\qquad
\text{for all } i,j.
\]
Therefore
\[
P E_\alpha P=0.
\]

Now let \(H\in\mathfrak h\).  Since \(H\) acts diagonally in the weight basis
and the supports are pairwise disjoint, all off-diagonal Cartan matrix
elements vanish:
\[
\bra{\psi_i}H\ket{\psi_j}=0
\qquad
(i\neq j).
\]
By assumption, the diagonal matrix elements are independent of \(i\).  Hence
there exists a scalar \(\lambda_H\) such that
\[
\bra{\psi_i}H\ket{\psi_i}=\lambda_H
\qquad
\text{for all } i.
\]
Thus
\[
P H P=\lambda_H P.
\]

The Cartan and root operators span \(\mathfrak{sl}_q\).  Hence
\[
P A^{(n)}P=\lambda_A P
\qquad
\text{for every } A\in\mathfrak{sl}_q,
\]
and so \(\mathcal C\) detects all traceless one-body errors.  Therefore
\(\mathcal C\) has distance at least \(2\).
\end{proof}

\subsection{The cyclic specialization}

The preceding criterion is especially useful when the codewords are generated
from a single vector by the cyclic shift.

Let \(X\) denote the cyclic shift
\[
X\ket a=\ket{a+1\!\!\!\pmod q}.
\]
As noted above, \(X\) normalizes the Cartan subalgebra: conjugating a diagonal
traceless matrix by \(X\) again gives a diagonal traceless matrix.  Thus
\[
(X^{\otimes n})^\dagger \mathfrak h X^{\otimes n}=\mathfrak h.
\]

\begin{corollary}[Cyclic support-separation criterion]
\label{cor:cyclic-separation}
Let \(\ket{\psi_0}\in V=\Sym^n(\CC^q)\), and define
\[
\ket{\psi_r}:=(X^{\otimes n})^r\ket{\psi_0},
\qquad
r=0,1,\dots,q-1.
\]
Let
\[
\mathcal S_r:=\supp\ket{\psi_r}.
\]
Suppose that the supports \(\mathcal S_0,\dots,\mathcal S_{q-1}\) are
pairwise disjoint, each \(\mathcal S_r\) is root-separated, the supports
\(\mathcal S_r\) and \(\mathcal S_s\) are mutually root-separated whenever
\(r\neq s\), and
\[
\bra{\psi_0}H\ket{\psi_0}=0
\qquad
\text{for all } H\in\mathfrak h.
\]
Then
\[
\mathcal C:=\mathrm{span}\{\ket{\psi_0},\dots,\ket{\psi_{q-1}}\}
\]
is a permutation-invariant quantum code with distance at least \(2\).
\end{corollary}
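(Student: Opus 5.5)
The plan is to reduce the corollary to Theorem~\ref{thm:support-separation} with \(K=q\). Hypotheses (1)--(3) of that theorem are assumed verbatim, so two things remain: that \(\ket{\psi_0},\dots,\ket{\psi_{q-1}}\) form an orthonormal family, and that \(\bra{\psi_r}H\ket{\psi_r}\) is independent of \(r\) for every \(H\in\mathfrak h\) (hypothesis (4)). Permutation invariance is immediate. \(X^{\otimes n}\) commutes with permutations of the tensor factors, so each \(\ket{\psi_r}\) lies in \(V=\Sym^n(\CC^q)\), and hence so does their span.

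For orthonormality, note that \(X^{\otimes n}\) is unitary, so \(\langle\psi_r|\psi_r\rangle=\langle\psi_0|\psi_0\rangle\). I will take \(\ket{\psi_0}\) to be normalized (implicitly a state). Orthogonality for \(r\neq s\) then follows from disjointness of supports. Indeed, \(\langle\psi_r|\psi_s\rangle=\sum_\lambda \overline{c^{(r)}_\lambda}c^{(s)}_\lambda\) over the occupation basis, and every term vanishes because no \(\lambda\) lies in both \(\mathcal S_r\) and \(\mathcal S_s\).

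For the Cartan condition, write
\[
\bra{\psi_r}H\ket{\psi_r}=\bra{\psi_0}\big((X^{\otimes n})^\dagger\big)^r H (X^{\otimes n})^r\ket{\psi_0}.
\]
The collective operator \(H^{(n)}\) conjugated by \(X^{\otimes n}\) equals the collective action of the single-site conjugate \(X^\dagger H X\), because \(X^{\otimes n}\) acts factorwise. By Lemma~\ref{lem:X-normalizes-Cartan}, iterated \(r\) times, the operator \(H':=(X^\dagger)^rHX^r\) again lies in \(\mathfrak h\). The hypothesis \(\bra{\psi_0}H'\ket{\psi_0}=0\) therefore gives \(\bra{\psi_r}H\ket{\psi_r}=0\) for all \(r\) and all \(H\in\mathfrak h\). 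So the diagonal expectations are all equal to \(0\), which is hypothesis (4).

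Theorem~\ref{thm:support-separation} then yields distance at least \(2\). The only mildly delicate point is the step identifying the conjugated collective operator with the collective action of the conjugated one-body operator; this is a one-line check on each tensor summand \(I^{\otimes(r-1)}\otimes A\otimes I^{\otimes(n-r)}\). I expect no genuine obstacle.
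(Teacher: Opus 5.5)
Your proposal is correct and follows the paper's proof: both reduce to Theorem~\ref{thm:support-separation} by writing $\bra{\psi_r}H\ket{\psi_r}$ as an expectation in $\ket{\psi_0}$ of $H$ conjugated by $(X^{\otimes n})^r$, then using Lemma~\ref{lem:X-normalizes-Cartan} to conclude that all Cartan expectations vanish. Your extra checks (orthonormality via unitarity and disjoint supports, and identifying the conjugated collective operator with the collective action of $X^{\dagger}HX$) fill in details the paper leaves implicit but do not change the argument.
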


\begin{proof}
We verify the Cartan condition in Theorem~\ref{thm:support-separation}.  For
any \(H\in\mathfrak h\),
\[
\bra{\psi_r}H\ket{\psi_r}
=
\bra{\psi_0}
(X^{\otimes n})^{-r}H(X^{\otimes n})^r
\ket{\psi_0}.
\]
Since \(X\) normalizes the Cartan subalgebra, the conjugated operator
\[
(X^{\otimes n})^{-r}H(X^{\otimes n})^r
\]
lies again in \(\mathfrak h\).  By assumption, \(\ket{\psi_0}\) has zero
expectation against every Cartan element.  Therefore
\[
\bra{\psi_r}H\ket{\psi_r}=0
\qquad
\text{for all } r.
\]
The hypotheses of Theorem~\ref{thm:support-separation} are now satisfied,
with \(\lambda_H=0\) for every \(H\in\mathfrak h\).  The result follows.
\end{proof}

\subsection{Reduction to a linear balancing problem}

In applications, the Cartan condition in
Theorem~\ref{thm:support-separation} becomes a finite system of linear
equations.

Suppose
\[
\ket{\psi_i}
=
\sum_{\lambda\in\mathcal S_i}
c_{i,\lambda}\ket{\lambda}.
\]
Since \(H\in\mathfrak h\) acts diagonally in the weight basis,
\[
H\ket{\lambda}=\lambda(H)\ket{\lambda},
\]
we have
\[
\bra{\psi_i}H\ket{\psi_i}
=
\sum_{\lambda\in\mathcal S_i}
\lambda(H)\,|c_{i,\lambda}|^2.
\]
Thus the Cartan constraints depend only on the nonnegative weights
\[
a_{i,\lambda}:=|c_{i,\lambda}|^2.
\]

In particular, the condition that the Cartan expectation be independent of
\(i\) becomes
\[
\sum_{\lambda\in\mathcal S_i}
\lambda(H)a_{i,\lambda}
=
\sum_{\mu\in\mathcal S_j}
\mu(H)a_{j,\mu}
\qquad
\text{for all } H\in\mathfrak h
\]
and all \(i,j\), together with the normalization conditions
\[
a_{i,\lambda}\ge 0,
\qquad
\sum_{\lambda\in\mathcal S_i}a_{i,\lambda}=1.
\]

For the cyclic odd-\(q\) construction, it is enough to solve these equations
for a single packet \(\mathcal S_0\), because the cyclic shift transports the
Cartan conditions to all other packets.  For the even-\(q\) construction, the
same balancing condition is checked directly for each packet.

Thus the construction of distance-two permutation-invariant codes reduces to
two tasks: choose mutually root-separated supports in the weight lattice, and
solve the resulting linear balancing equations for the Cartan subalgebra.

\section{The combinatorial support model}
\label{sec:combinatorial-model}

In this section we specialize the support-separation criterion to
\[
    V=\Sym^4(\CC^q).
\]
The goal is to isolate a simple class of supports for which all root-error
Knill--Laflamme conditions vanish automatically.  The remaining Cartan
conditions then become elementary balancing constraints on the vertices and
edges of a complete graph.
\subsection{The even-entry layer}

Let \(e_0,\dots,e_{q-1}\) denote the standard basis vectors of \(\mathbb Z^q\).
An occupation vector
\[
    a=(a_0,\dots,a_{q-1}),
    \qquad
    a_0+\cdots+a_{q-1}=4,
\]
is said to lie in the \emph{even-entry layer} if every \(a_i\) is even.
Since the total occupation is \(4\), there are exactly two types of such
vectors:
\[
    4e_i,
    \qquad
    2e_i+2e_j
    \quad (i\neq j).
\]
We denote the set of all even-entry occupation vectors by
\[
\mathcal E
:=
\{\,4e_i : 0\le i\le q-1\,\}
\;\cup\;
\{\,2e_i+2e_j : 0\le i<j\le q-1\,\}.
\]

The key feature of \(\mathcal E\) is that it contains no root-adjacent pairs.

\begin{lemma}
\label{lem:even-layer-root-sep}
If \(x,y\in\mathcal E\) and \(x\neq y\), then
\[
    x-y\neq e_a-e_b
    \qquad
    \text{for all } a\neq b.
\]
Consequently, \(\mathcal E\) contains no root-adjacent pairs.
\end{lemma}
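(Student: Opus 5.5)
The plan is a parity argument: every coordinate of \(x-y\) is even, whereas a root \(e_a-e_b\) has odd coordinates. Write \(x=(x_0,\dots,x_{q-1})\) and \(y=(y_0,\dots,y_{q-1})\) with all \(x_i,y_i\) even, by definition of \(\mathcal E\). Then every coordinate of \(x-y\) is even. For \(a\neq b\), the vector \(e_a-e_b\) has coordinate \(+1\) in position \(a\), which is odd. Hence \(x-y\neq e_a-e_b\). Note that \(x\neq y\) is not even needed here, since \(0\) is also not a root.

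For the consequence, root-adjacency of two weights means, by the definition of root-adjacent weights, that their difference is a root of \(\mathfrak{sl}_q\). In occupation coordinates, the roots are exactly the vectors \(e_a-e_b\) with \(a\neq b\), because \(E_{ab}\) moves one unit of occupation from \(b\) to \(a\). Translating through the weight dictionary of Lemma~\ref{lem:occupation-weight-dictionary} is a linear injection on vectors with fixed total occupation. Therefore a difference of weights is a root exactly when the difference of occupation vectors is some \(e_a-e_b\). The first part excludes this, so no two elements of \(\mathcal E\) are root-adjacent.

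There is no real obstacle. The only point requiring care is stating explicitly that root-adjacency in the weight lattice coincides with the occupation-vector difference being \(e_a-e_b\). This identification was already noted in the text following the definition of root-adjacency.
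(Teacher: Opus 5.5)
Your proposal is correct and follows the paper's own proof: both observe that every coordinate of \(x-y\) is even, whereas \(e_a-e_b\) has a coordinate equal to \(\pm 1\). Your extra check is also sound and is a point the paper leaves implicit: the occupation-to-weight map is injective on differences of total occupation zero, so root-adjacency of weights is the same as the occupation difference being some \(e_a-e_b\).
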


\begin{proof}
Every vector in \(\mathcal E\) has all coordinates even.  Hence, for any
\(x,y\in\mathcal E\), every coordinate of \(x-y\) is even.  On the other hand,
a root \(e_a-e_b\) has one coordinate equal to \(1\), one coordinate equal to
\(-1\), and all other coordinates equal to \(0\).  Therefore \(x-y\) cannot
equal a root.
\end{proof}

It follows that every subset of \(\mathcal E\) is root-separated.  Moreover,
any two disjoint subsets of \(\mathcal E\) are mutually root-separated.  Thus,
once the codeword supports are chosen as disjoint subsets of \(\mathcal E\),
all root-error Knill--Laflamme conditions vanish automatically by
Proposition~\ref{prop:root-vanishing-weight}.

\subsection{Graph-theoretic interpretation}

The even-entry layer has a natural interpretation in terms of the complete
graph \(K_q\).  Indeed, an even-entry occupation vector of total weight \(4\)
is determined either by one label occupied with multiplicity \(4\), or by an
unordered pair of labels each occupied with multiplicity \(2\).  These two
possibilities are naturally identified with the vertices and edges of \(K_q\).

Let \(V(K_q)\) and \(E(K_q)\) denote the vertex set and edge set of the complete
graph on \(\{0,1,\dots,q-1\}\).  We identify
\[
    4e_i
    \quad\longleftrightarrow\quad
    i\in V(K_q),
\]
and
\[
    2e_i+2e_j
    \quad\longleftrightarrow\quad
    \{i,j\}\in E(K_q).
\]
Thus
\[
    \mathcal E \cong V(K_q)\sqcup E(K_q).
\]

This identification also gives a simple form for the Cartan expectations.
Let
\[
    H=\diag(h_0,\dots,h_{q-1}),
    \qquad
    \sum_{i=0}^{q-1}h_i=0,
\]
be an arbitrary element of the standard Cartan subalgebra.  We identify an
occupation vector with its associated weight functional.  Thus
\[
    (4e_i)(H)=4h_i,
\]
while
\[
    (2e_i+2e_j)(H)=2h_i+2h_j.
\]
Therefore, after root separation has eliminated the root-error conditions, the
remaining Cartan conditions become linear balancing conditions on weighted
collections of vertices and edges.  In the constructions below, the packets
are chosen so that each such weighted collection has Cartan expectation zero.

\subsection{Reformulation of the construction problem}

By Theorem~\ref{thm:support-separation}, to construct a
\(((4,q,2))_q\) permutation-invariant code it suffices to find \(q\)
orthonormal states
\[
    \ket{\psi_0},\dots,\ket{\psi_{q-1}}\in \Sym^4(\CC^q)
\]
whose supports
\[
    \mathcal S_0,\dots,\mathcal S_{q-1}
    \subseteq \mathcal E
\]
are pairwise disjoint and whose Cartan expectations are independent of the
logical basis state.

Since the supports lie in \(\mathcal E\), the root-error conditions are
automatic.  The remaining requirements are:

\begin{enumerate}
\item \emph{Disjointness:} the supports
\[
    \mathcal S_0,\dots,\mathcal S_{q-1}
\]
are pairwise disjoint subsets of \(\mathcal E\).

\item \emph{Normalization:} each state supported on \(\mathcal S_r\) has
squared coefficients
\[
    a_{\lambda}^{(r)}\ge 0,
    \qquad
    \sum_{\lambda\in\mathcal S_r} a_{\lambda}^{(r)}=1.
\]

\item \emph{Cartan balance:} for every traceless diagonal
\[
    H=\diag(h_0,\dots,h_{q-1}),
    \qquad
    \sum_i h_i=0,
\]
the quantity
\[
    \sum_{\lambda\in\mathcal S_r}
    \lambda(H)a_{\lambda}^{(r)}
\]
is independent of \(r\).  In the constructions below, this common value will
be \(0\).
\end{enumerate}

In graph-theoretic terms, the problem is to organize the vertices and edges of
\(K_q\) into \(q\) disjoint packets, and then assign
nonnegative weights to the elements of each packet so that the weighted
Cartan averages agree across all packets.  In the constructions below, this
common average is zero.

The preceding discussion suggests the following construction principle: choose the supports \(\mathcal S_r\) by decomposing the vertex-edge set
\(V(K_q)\sqcup E(K_q)\) into \(q\) disjoint packets with enough symmetry to
force Cartan balance. The two constructions in the next sections implement this principle in
different ways depending on the parity of \(q\).  For odd \(q\), the packets
come from the cyclic midpoint coloring of the edges of \(K_q\).  For even
\(q\), they come from a \(1\)-factorization of \(K_q\), together with a
separate vertex packet.

\section{A motivating example for odd-$q$: the case \texorpdfstring{$q=3$}{q=3}}
\label{sec:su3-example}

We illustrate the construction in the first nontrivial case
\[
    V=\Sym^4(\CC^3),
\]
the irreducible \(\SU(3)\)-module of highest weight \((4,0)\).  The point of
the example is to show explicitly how the even-entry layer, root separation,
and Cartan balancing combine to produce a distance-two permutation-invariant
code.

We use the Cartan basis
\[
    H_1=E_{00}-E_{11},
    \qquad
    H_2=E_{11}-E_{22}.
\]
Thus an occupation state \(\ket{a_0,a_1,a_2}\) has weight
\[
    (a_0-a_1,\; a_1-a_2).
\]
For \(n=4\), the even-entry layer consists of occupations states $(4,0,0)$, $(0,4,0)$, $(0,0,4)$ and $(2,2,0)$, $(2,0,2)$, $(0,2,2)$. We can also write these as
\[
    4e_0,\;4e_1,\;4e_2
    \qquad\text{and}\qquad
    2e_0+2e_1,\;2e_0+2e_2,\;2e_1+2e_2.
\]
In the graph-theoretic picture of Section~\ref{sec:combinatorial-model},
these are the three vertices and three edges of \(K_3\).

\subsection{The support packets}

We choose the initial packet
\[
    \mathcal S_0
    =
    \{\,4e_0,\;2e_1+2e_2\,\}.
\]
Equivalently, in occupation notation,
\[
    \mathcal S_0
    =
    \{\, (4,0,0),\;(0,2,2)\,\}.
\]
Its cyclic translates are
\[
    \mathcal S_1
    =
    \{\, (0,4,0),\;(2,0,2)\,\},
    \qquad
    \mathcal S_2
    =
    \{\, (0,0,4),\;(2,2,0)\,\}.
\]
These three packets are pairwise disjoint subsets of the even-entry layer.
Therefore, by Lemma~\ref{lem:even-layer-root-sep}, each packet is
root-separated and the packets are mutually root-separated.  Hence all
root-error Knill--Laflamme conditions vanish automatically.

\subsection{Cartan balancing}

We now choose coefficients on \(\mathcal S_0\).  Write
\[
    \ket{\psi_0}
    =
    c_1\ket{4,0,0}
    +
    c_2\ket{0,2,2},
\]
where \(\ket{a_0,a_1,a_2}\) denotes the normalized occupation-number state.

The Cartan operator \(H_2\) has eigenvalue \(0\) on both occupation states:
\[
    H_2\ket{4,0,0}=0,
    \qquad
    H_2\ket{0,2,2}=0.
\]
For \(H_1\), the eigenvalues are
\[
    H_1\ket{4,0,0}=4\ket{4,0,0},
    \qquad
    H_1\ket{0,2,2}=-2\ket{0,2,2}.
\]
Therefore
\[
    \bra{\psi_0}H_1\ket{\psi_0}
    =
    4|c_1|^2-2|c_2|^2.
\]
The Cartan balance condition
\[
    \bra{\psi_0}H\ket{\psi_0}=0
    \qquad
    \text{for all }H\in\mathfrak h
\]
is therefore equivalent to
\[
    4|c_1|^2-2|c_2|^2=0,
    \qquad
    |c_1|^2+|c_2|^2=1.
\]
Thus
\[
    |c_1|^2=\frac13,
    \qquad
    |c_2|^2=\frac23.
\]
Choosing positive real coefficients gives
\[
    \ket{\psi_0}
    =
    \frac1{\sqrt3}\ket{4,0,0}
    +
    \sqrt{\frac23}\ket{0,2,2}.
\]

Now define
\[
    \ket{\psi_r}:=(X^{\otimes 4})^r\ket{\psi_0},
    \qquad
    r=0,1,2.
\]
Since \(X\) cyclically permutes the Cartan subalgebra, the same Cartan balance
holds for each cyclic translate.  Hence the cyclic support-separation
criterion, Corollary~\ref{cor:cyclic-separation}, applies.

\subsection{The codewords}

Translating from normalized occupation states to symmetrized computational
words, we have
\[
    \ket{0,2,2}
    =
    \frac1{\sqrt6}\ket{\overline{1122}},
\]
where \(\ket{\overline{1122}}\) denotes the sum of the six distinct
permutations of the word \(1122\).  Thus
\[
    \sqrt{\frac23}\ket{0,2,2}
    =
    \frac13\ket{\overline{1122}}.
\]
Therefore the three logical basis states are
\begin{align*}
    \ket{\mathbf 0}
    &=
    \frac1{\sqrt3}\ket{0000}
    +
    \frac13\ket{\overline{1122}},\\
    \ket{\mathbf 1}
    &=
    \frac1{\sqrt3}\ket{1111}
    +
    \frac13\ket{\overline{0022}},\\
    \ket{\mathbf 2}
    &=
    \frac1{\sqrt3}\ket{2222}
    +
    \frac13\ket{\overline{0011}}.
\end{align*}

\begin{proposition}
The vectors
\[
    \ket{\mathbf 0},\ket{\mathbf 1},\ket{\mathbf 2}
\]
span a permutation-invariant quantum code in \(\Sym^4(\CC^3)\) with
parameters
\[
    ((4,3,2))_3.
\]
\end{proposition}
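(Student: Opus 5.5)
The plan is to apply the cyclic support-separation criterion, Corollary~\ref{cor:cyclic-separation}, with $\ket{\psi_0}=\ket{\mathbf 0}$ and $\ket{\psi_r}=(X^{\otimes 4})^r\ket{\mathbf 0}$. There are three things to establish: that $\ket{\mathbf 0},\ket{\mathbf 1},\ket{\mathbf 2}$ are exactly these cyclic translates and are orthonormal, that the support hypotheses hold, and that the Cartan expectation of $\ket{\psi_0}$ vanishes. The conclusion then gives distance at least $2$, and the dimension count $K=3$ gives the parameters $((4,3,2))_3$.

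First I would confirm that the displayed states coincide with the cyclic translates. The shift $X^{\otimes 4}$ sends $\ket{0000}\mapsto\ket{1111}\mapsto\ket{2222}$. It also sends $\ket{\overline{1122}}\mapsto\ket{\overline{2200}}=\ket{\overline{0022}}\mapsto\ket{\overline{0011}}$, since permuting letters inside the symmetrized sum does not change it. So the states match the ones written above. Normalization of $\ket{\mathbf 0}$ follows from $\frac13+\frac19\cdot 6=1$, because $\ket{\overline{1122}}$ is a sum of six orthonormal words. The supports $\mathcal S_0,\mathcal S_1,\mathcal S_2$ listed in the subsection above are pairwise disjoint, which gives orthogonality. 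Unitarity of $X^{\otimes 4}$ transports normalization to all three states.

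Next come the support hypotheses. All three packets lie in the even-entry layer $\mathcal E$. By Lemma~\ref{lem:even-layer-root-sep}, each packet is root-separated, and distinct disjoint packets are mutually root-separated. Disjointness was already checked above.

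Finally I would verify the Cartan condition $\bra{\psi_0}H\ket{\psi_0}=0$ for all $H\in\mathfrak h$. By linearity it suffices to check $H_1$ and $H_2$. Using Lemma~\ref{lem:occupation-weight-dictionary}, $H_2$ has eigenvalue $0$ on both $(4,0,0)$ and $(0,2,2)$. For $H_1$ the expectation is $4\cdot\frac13-2\cdot\frac23=0$.

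No step is a real obstacle. The only point requiring care is the bookkeeping between normalized occupation states and symmetrized words, namely that $\frac13\ket{\overline{1122}}=\sqrt{2/3}\,\ket{0,2,2}$. This is what makes the squared amplitudes $\frac13,\frac23$, which are exactly the ones that balance $H_1$.
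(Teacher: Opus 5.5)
Your proposal is correct and follows essentially the same route as the paper: orthonormality from normalization plus disjoint supports, vanishing of root errors because all three packets lie in the even-entry layer (Lemma~\ref{lem:even-layer-root-sep}), and Cartan balance of $\ket{\mathbf 0}$ transported to the other two states by the cyclic shift via Corollary~\ref{cor:cyclic-separation}. Your explicit checks are accurate: the displayed states are indeed the cyclic translates, and the $H_1$, $H_2$ expectations vanish. They simply spell out steps the paper attributes to the construction.
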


\begin{proof}
Each vector is normalized by construction, and the three vectors have
pairwise disjoint supports in the occupation-number basis.  Hence they are
orthonormal, so their span has dimension \(3\).

The support packets are pairwise disjoint subsets of the even-entry layer, so
all root-error matrix elements vanish by Lemma~\ref{lem:even-layer-root-sep}
and Proposition~\ref{prop:root-vanishing-weight}.  The coefficients were
chosen so that the Cartan expectation of \(\ket{\mathbf 0}\) is zero for every
\(H\in\mathfrak h\), and the cyclic shift transports this condition to
\(\ket{\mathbf 1}\) and \(\ket{\mathbf 2}\).  Therefore
Corollary~\ref{cor:cyclic-separation} implies that the span detects all
traceless one-body errors.  The code has distance at least \(2\), and hence
has parameters \(((4,3,2))_3\).
\end{proof}

This example is the odd-\(q\) construction in its smallest instance.  Each
logical basis vector is supported on one vertex of \(K_3\) and the opposite
edge, and the coefficient ratio is uniquely determined by Cartan balance.

The code also has a simple transversal Pauli action.  With
\(\omega=e^{2\pi i/3}\) and \(Z\ket a=\omega^a\ket a\), each occupation vector
appearing in \(\ket{\mathbf r}\) has total label sum congruent to \(r\) modulo
\(3\).  Hence
\[
    Z^{\otimes 4}\ket{\mathbf r}
    =
    \omega^r\ket{\mathbf r}.
\]
Together with
\[
    X^{\otimes 4}\ket{\mathbf r}=\ket{\mathbf {r+1}},
\]
this shows that the physical transversal operators \(X^{\otimes 4}\) and
\(Z^{\otimes 4}\) implement the standard logical qutrit Pauli operators.

This \(q=3\) code is the \((4,0)\) \(\SU(3)\)-representation code
appearing in work of Herbert, Gross, and Newman \cite{herbert2023qutritcodesrepresentationssu3}; it was our inspiration and the acorn from which this paper grew.

\section{The general odd-\texorpdfstring{$q$}{q} construction}
\label{sec:general-construction}

We now construct permutation-invariant codes in
\[
    V=\Sym^4(\CC^q)
\]
for every odd integer \(q\ge 3\).  In the language of
Section~\ref{sec:combinatorial-model}, the construction decomposes the
vertex-edge set
\[
    V(K_q)\sqcup E(K_q)
\]
into \(q\) disjoint packets satisfying the Cartan balancing conditions.  For
odd \(q\), this decomposition is obtained from the cyclic midpoint coloring of
the edges of \(K_q\).

\subsection{Midpoint packets}

Because \(q\) is odd, the element \(2\) is invertible modulo \(q\).  Hence
every edge \(\{i,j\}\) of \(K_q\) has a unique midpoint
\[
    r \equiv \frac{i+j}{2} \pmod q.
\]
For each \(r\in\{0,1,\dots,q-1\}\), let \(M_r\) be the set of edges whose
midpoint is \(r\):
\[
    M_r
    :=
    \bigl\{\{i,j\}\in E(K_q): i+j\equiv 2r \pmod q\bigr\}.
\]
Equivalently,
\[
    M_r
    =
    \bigl\{\{r+i,r-i\}:1\le i\le \tfrac{q-1}{2}\bigr\},
\]
with indices understood modulo \(q\).  Each \(M_r\) is a near-perfect matching: its edges are pairwise
disjoint, it contains \((q-1)/2\) edges, and it covers every vertex except
\(r\).

We define the \(r\)-th support packet by adjoining this omitted vertex:
\[
    \mathcal S_r
    :=
    \{\,4e_r\,\}
    \cup
    \{\,2e_i+2e_j:\{i,j\}\in M_r\,\}.
\]
Equivalently,
\[
    \mathcal S_r
    =
    \{\,4e_r\,\}
    \cup
    \left\{
        2e_{r+i}+2e_{r-i}
        :
        1\le i\le \frac{q-1}{2}
    \right\}.
\]

\subsection{The associated states}

Let
\[
    \ket{a_0,\dots,a_{q-1}}
\]
denote the normalized occupation-number state in \(\Sym^4(\CC^q)\).  Thus
\[
    \ket{4e_r}
\]
is the pure word \(\ket{rrrr}\), while
\[
    \ket{2e_i+2e_j}
\]
is the normalized symmetric state with two copies of \(i\) and two copies of
\(j\).

We define
\[
    \ket{\psi_0}
    =
    c_0\ket{4e_0}
    +
    \sum_{i=1}^{(q-1)/2}
    c_i\ket{2e_i+2e_{q-i}},
\]
and then set
\[
    \ket{\psi_r}:=(X^{\otimes 4})^r\ket{\psi_0},
    \qquad
    r=0,1,\dots,q-1.
\]
By construction,
\[
    \supp\ket{\psi_r}=\mathcal S_r.
\]

\subsection{Disjointness and root separation}

\begin{lemma}
\label{lem:general-root-separation}
For odd \(q\), the supports
\[
    \mathcal S_0,\mathcal S_1,\dots,\mathcal S_{q-1}
\]
are pairwise disjoint.  Moreover, each \(\mathcal S_r\) is root-separated,
and the supports are mutually root-separated.
\end{lemma}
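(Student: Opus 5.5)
The plan is to prove disjointness by a direct combinatorial argument on $K_q$, and then to get both root-separation claims for free from Lemma~\ref{lem:even-layer-root-sep}.

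\emph{Disjointness.} Every element of $\mathcal S_r$ lies in the even-entry layer $\mathcal E\cong V(K_q)\sqcup E(K_q)$, so we treat vertex elements and edge elements separately.
\begin{enumerate}
\item A vertex element $4e_s$ belongs to $\mathcal S_r$ if and only if $s=r$. Indeed, the only vertex element of $\mathcal S_r$ is $4e_r$. Hence distinct packets share no vertex element.
\item An edge element $2e_i+2e_j$ with $\{i,j\}\in E(K_q)$ belongs to $\mathcal S_r$ if and only if $\{i,j\}\in M_r$, that is, $i+j\equiv 2r\pmod q$. Because $q$ is odd, $2$ is invertible modulo $q$. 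So $r\equiv 2^{-1}(i+j)$ is uniquely determined by the edge, and each edge lies in exactly one $M_r$.
\item A vertex element and an edge element are never equal, since one has a single nonzero coordinate equal to $4$ and the other has two nonzero coordinates equal to $2$.
\end{enumerate}
Together these show $\mathcal S_r\cap\mathcal S_s=\emptyset$ for $r\neq s$.

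It is also worth checking that the description $M_r=\{\{r+i,r-i\}:1\le i\le (q-1)/2\}$ really gives genuine edges that are distinct and cover every vertex other than $r$. We have $r+i\not\equiv r-i$ because $2i\not\equiv 0$ for $1\le i\le (q-1)/2$. Moreover, the values $\pm i$ for $1\le i\le (q-1)/2$ run over exactly the nonzero residues once each, which gives the distinctness and the covering. This check matters because it makes each listed $2e_{r+i}+2e_{r-i}$ a valid element of $\mathcal E$, and it confirms that $\supp\ket{\psi_r}=\mathcal S_r$ as claimed.

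\emph{Root separation.} Each $\mathcal S_r$ is contained in $\mathcal E$. By Lemma~\ref{lem:even-layer-root-sep}, no two distinct elements of $\mathcal E$ are root-adjacent. It follows at once that each $\mathcal S_r$ is root-separated. For mutual separation, take $x\in\mathcal S_r$ and $y\in\mathcal S_s$ with $r\neq s$. Disjointness gives $x\neq y$, and then the lemma gives $x-y\neq e_a-e_b$ for all $a\neq b$.

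The only point requiring care is the disjointness of the edge parts, which is where the oddness of $q$ enters through the invertibility of $2$ modulo $q$. Everything else is immediate from the parity argument already established in Lemma~\ref{lem:even-layer-root-sep}.
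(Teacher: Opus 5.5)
Your proof is correct and follows the paper's argument essentially step for step. Disjointness comes from three facts: vertex elements differ from edge elements by shape, and the midpoint $2^{-1}(i+j)$ is unique because $q$ is odd. Both root-separation claims then follow from Lemma~\ref{lem:even-layer-root-sep}. Your extra check that each $M_r$ is a genuine near-perfect matching is a harmless addition that the paper only asserts when it defines the construction.
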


\begin{proof}
We first prove disjointness.  Clearly
\[
    4e_r=4e_s
    \quad\Longrightarrow\quad
    r=s.
\]
Also, \(4e_r\) cannot equal \(2e_i+2e_j\), since the former has one nonzero
coordinate equal to \(4\), while the latter has two nonzero coordinates equal
to \(2\).

Now suppose
\[
    2e_i+2e_j\in \mathcal S_r\cap \mathcal S_s.
\]
Then the edge \(\{i,j\}\) lies in both \(M_r\) and \(M_s\).  Hence
\[
    i+j\equiv 2r\equiv 2s\pmod q.
\]
Since \(2\) is invertible modulo \(q\), we get
\[
    r\equiv s\pmod q,
\]
and hence \(r=s\).  Therefore the supports are pairwise disjoint.

Each \(\mathcal S_r\) is contained in the even-entry layer
\(\mathcal E\).  By Lemma~\ref{lem:even-layer-root-sep}, every subset of
\(\mathcal E\) is root-separated.  Since the supports are disjoint subsets of
\(\mathcal E\), they are also mutually root-separated.
\end{proof}

Thus all root-error Knill--Laflamme conditions vanish automatically.

\subsection{Cartan balancing}

It remains to impose the Cartan conditions.  We use the Cartan basis
\[
    H_j=E_{j-1,j-1}-E_{j,j},
    \qquad
    1\le j\le q-1.
\]
Equivalently, if
\[
    H=\diag(h_0,\dots,h_{q-1}),
    \qquad
    \sum_{a=0}^{q-1}h_a=0,
\]
then an occupation vector \(a=(a_0,\dots,a_{q-1})\) contributes
\[
    a(H)=\sum_{\ell=0}^{q-1} a_\ell h_\ell.
\]

For
\[
    \ket{\psi_0}
    =
    c_0\ket{4e_0}
    +
    \sum_{i=1}^{(q-1)/2}
    c_i\ket{2e_i+2e_{q-i}},
\]
the Cartan expectation is
\[
    \bra{\psi_0}H\ket{\psi_0}
    =
    4|c_0|^2 h_0
    +
    \sum_{i=1}^{(q-1)/2}
    2|c_i|^2(h_i+h_{q-i}).
\]
We want this to vanish for every traceless diagonal \(H\).  Since
\[
    h_0+\sum_{i=1}^{(q-1)/2}(h_i+h_{q-i})=0,
\]
this is achieved precisely when the coefficients of all grouped variables
\[
    h_0,\quad h_i+h_{q-i}
\]
are equal.  Thus the Cartan balance equations are
\[
    4|c_0|^2=2|c_i|^2
    \qquad
    \left(1\le i\le \frac{q-1}{2}\right),
\]
together with normalization
\[
    |c_0|^2+\sum_{i=1}^{(q-1)/2}|c_i|^2=1.
\]
Equivalently,
\[
    |c_i|^2=2|c_0|^2
    \qquad
    \left(1\le i\le \frac{q-1}{2}\right).
\]
Substituting into the normalization gives
\[
    |c_0|^2+\frac{q-1}{2}\cdot 2|c_0|^2
    =
    q|c_0|^2
    =
    1.
\]
Hence
\[
    |c_0|^2=\frac1q,
    \qquad
    |c_i|^2=\frac2q
    \quad
    \left(1\le i\le \frac{q-1}{2}\right).
\]
Choosing positive real coefficients gives
\[
    c_0=\frac1{\sqrt q},
    \qquad
    c_i=\sqrt{\frac2q}.
\]

\subsection{The resulting code}

With these coefficients, define
\[
    \ket{\mathbf 0}
    =
    \frac1{\sqrt q}\ket{4e_0}
    +
    \sum_{i=1}^{(q-1)/2}
    \sqrt{\frac2q}\,
    \ket{2e_i+2e_{q-i}}.
\]
Equivalently, in symmetrized word notation,
\[
    \ket{\mathbf 0}
    =
    \frac1{\sqrt q}\ket{0000}
    +
    \frac1{\sqrt{3q}}
    \sum_{i=1}^{(q-1)/2}
    \ket{\overline{i,i,q-i,q-i}},
\]
where \(\ket{\overline{i,i,q-i,q-i}}\) denotes the sum of the six distinct
permutations of the word \(i,i,q-i,q-i\).  Define
\[
    \ket{\mathbf r}:=(X^{\otimes 4})^r\ket{\mathbf 0},
    \qquad
    r=0,1,\dots,q-1.
\]

\begin{theorem}
\label{thm:general-family}
For every odd integer \(q\ge 3\), the vectors
\[
    \ket{\mathbf 0},\ket{\mathbf 1},\dots,\ket{\mathbf {q-1}}
\]
span a permutation-invariant quantum code in \(\Sym^4(\CC^q)\) with
parameters
\[
    ((4,q,2))_q.
\]
\end{theorem}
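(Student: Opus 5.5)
The plan is to apply the cyclic support-separation criterion, Corollary~\ref{cor:cyclic-separation}, with seed vector \(\ket{\psi_0}=\ket{\mathbf 0}\). By construction \(\ket{\mathbf r}=(X^{\otimes 4})^r\ket{\mathbf 0}\). Since \(X^{\otimes 4}\) cyclically permutes occupation entries, it sends \(4e_0\mapsto 4e_r\) and \(2e_i+2e_{q-i}\mapsto 2e_{r+i}+2e_{r-i}\). Hence \(\supp\ket{\mathbf r}=\mathcal S_r\), the midpoint packet. I will first confirm that the two expressions for \(\ket{\mathbf 0}\) agree. For \(i\neq q-i\), which always holds since \(q\) is odd, \(\ket{2e_i+2e_{q-i}}=\frac1{\sqrt6}\ket{\overline{i,i,q-i,q-i}}\), so \(\sqrt{2/q}\cdot\frac1{\sqrt6}=\frac1{\sqrt{3q}}\).

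Next I will check the hypotheses of the corollary in order.

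\begin{enumerate}
\item \emph{Normalization.} The occupation states are orthonormal, and the squared coefficients sum to \(\frac1q+\frac{q-1}2\cdot\frac2q=1\). Because \(X^{\otimes 4}\) is unitary, every \(\ket{\mathbf r}\) is normalized.

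\item \emph{Disjointness and root separation.} Lemma~\ref{lem:general-root-separation} shows the supports \(\mathcal S_0,\dots,\mathcal S_{q-1}\) are pairwise disjoint, each is root-separated, and distinct ones are mutually root-separated. This rests on Lemma~\ref{lem:even-layer-root-sep} and on the invertibility of \(2\) modulo \(q\). Disjoint supports also give orthogonality, so the span has dimension exactly \(q\).

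\item \emph{Cartan balance.} For \(H=\diag(h_0,\dots,h_{q-1})\) traceless, the expectation is
\[
\bra{\mathbf 0}H\ket{\mathbf 0}=\tfrac4q h_0+\sum_{i=1}^{(q-1)/2}\tfrac4q(h_i+h_{q-i})=\tfrac4q\sum_{a}h_a=0 .
\]
This uses that the vertex \(0\) together with the edges \(\{i,q-i\}\) cover every vertex of \(K_q\) exactly once.
\end{enumerate}

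With these checked, Corollary~\ref{cor:cyclic-separation} shows the span detects all traceless one-body errors. Lemma~\ref{lem:collective-errors-suffice} then gives distance at least \(2\), and the dimension count gives parameters \(((4,q,2))_q\).

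The only step needing real care is the cover property in the Cartan computation: the indices \(\{0\}\cup\{i,q-i:1\le i\le (q-1)/2\}\) must partition \(\mathbb Z_q\) with no repeats. This holds because \(i\equiv q-i\) would force \(2i\equiv 0\), which is impossible for odd \(q\) and \(1\le i\le (q-1)/2\). Everything else is direct assembly of results already established in the paper.
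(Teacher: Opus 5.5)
Your proposal is correct and takes essentially the same route as the paper: orthonormality and root separation come from Lemma~\ref{lem:general-root-separation}, you show $\bra{\mathbf 0}H\ket{\mathbf 0}=0$ for all $H\in\mathfrak h$, and you then invoke Corollary~\ref{cor:cyclic-separation}. The differences are cosmetic: you check $\bra{\mathbf 0}H\ket{\mathbf 0}=\frac4q\sum_a h_a=0$ directly from the stated coefficients instead of deriving them from the balance equations, and the partition of $\mathbb Z_q$ by $\{0\}\cup\{i,q-i\}$ is most simply justified by $1\le i\le\frac{q-1}{2}<\frac{q+1}{2}\le q-i\le q-1$, which rules out every repeat and not only $i\equiv q-i$.
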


\begin{proof}
The vectors are normalized by construction.  By
Lemma~\ref{lem:general-root-separation}, their supports are pairwise
disjoint, so the vectors are mutually orthogonal.  Hence their span has
dimension \(q\).

The same lemma shows that the supports are root-separated and mutually
root-separated.  Therefore all root-error matrix elements vanish.  The
coefficients were chosen so that
\[
    \bra{\mathbf 0}H\ket{\mathbf 0}=0
    \qquad
    \text{for all }H\in\mathfrak h.
\]
Since the remaining codewords are cyclic translates of \(\ket{\mathbf 0}\),
Corollary~\ref{cor:cyclic-separation} implies that the span detects all
traceless one-body errors.  Thus the code has distance at least \(2\).
Since its length is \(4\) and its dimension is \(q\), it has parameters
\[
    ((4,q,2))_q.
\]
\end{proof}

\begin{proposition}[Transversal Pauli action for odd \(q\)]
\label{prop:odd-transversal-pauli}
Let \(q\ge 3\) be odd, and let \(\mathcal C\subseteq \Sym^4(\CC^q)\) be the
code of Theorem~\ref{thm:general-family}.  Let
\[
    X\ket a=\ket{a+1\!\!\!\pmod q},
    \qquad
    Z\ket a=\omega^a\ket a,
    \qquad
    \omega=e^{2\pi i/q}.
\]
Then \(\mathcal C\) is preserved by \(X^{\otimes 4}\) and \(Z^{\otimes 4}\).
In the logical basis \(\{\ket{\mathbf r}\}_{r\in \Z_q}\),
\[
    X^{\otimes 4}\ket{\mathbf r}
    =
    \ket{\mathbf {r+1}},
    \qquad
    Z^{\otimes 4}\ket{\mathbf r}
    =
    \omega^{4r}\ket{\mathbf r}.
\]
Equivalently,
\[
    X^{\otimes 4}=\overline X,
    \qquad
    Z^{\otimes 4}=\overline Z^{\,4}.
\]
Since \(q\) is odd, 4 is invertible mod $ q $, so \(\overline Z^{\,4}\) generates
the full logical phase subgroup.  Thus the odd-\(q\) family admits a
transversal realization of the full logical Pauli group.
\end{proposition}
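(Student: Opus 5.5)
The \(X^{\otimes 4}\) statement is immediate from the definition \(\ket{\mathbf r}=(X^{\otimes4})^r\ket{\mathbf 0}\). We also need \((X^{\otimes4})^q=I\), which holds since \(X^q=I\). Hence \(X^{\otimes4}\ket{\mathbf{q-1}}=\ket{\mathbf 0}\), the indices are read modulo \(q\), and \(\mathcal C\) is preserved by \(X^{\otimes4}\).

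For \(Z^{\otimes4}\), note that \(Z^{\otimes4}\) acts diagonally on computational words. On a word with letters \(w_1,\dots,w_4\) it acts by \(\omega^{w_1+\cdots+w_4}\). An occupation-number state \(\ket{a_0,\dots,a_{q-1}}\) is a sum of words with the same multiset of letters, so it is an eigenvector with eigenvalue \(\omega^{\sum_\ell \ell a_\ell}\). It therefore suffices to compute the label sum \(\sum_\ell \ell a_\ell \bmod q\) on each element of the support \(\mathcal S_r\):
\begin{itemize}
\item for \(4e_r\) the label sum is \(4r\);
\item for \(2e_{r+i}+2e_{r-i}\) it is \(2(r+i)+2(r-i)=4r \pmod q\).
\end{itemize}
Thus every basis vector in the support of \(\ket{\mathbf r}\) has eigenvalue \(\omega^{4r}\), so \(Z^{\otimes4}\ket{\mathbf r}=\omega^{4r}\ket{\mathbf r}\). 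This shows that \(\mathcal C\) is preserved by \(Z^{\otimes4}\). One subtlety must be checked here: the indices \(r\pm i\) are reduced mod \(q\) to lie in \(\{0,\dots,q-1\}\), but this changes the label sum only by multiples of \(q\), so the exponent is unaffected because \(\omega^q=1\).

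Define logical operators on \(\mathcal C\) by \(\overline X\ket{\mathbf r}=\ket{\mathbf{r+1}}\) and \(\overline Z\ket{\mathbf r}=\omega^r\ket{\mathbf r}\). Then the two computations read \(X^{\otimes4}|_{\mathcal C}=\overline X\) and \(Z^{\otimes4}|_{\mathcal C}=\overline Z^{\,4}\).

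Since \(q\) is odd, \(\gcd(4,q)=1\), so there is \(k\) with \(4k\equiv1\pmod q\). Then \((Z^{\otimes4})^k\) restricts to \(\overline Z\). Hence \(\overline Z^{\,4}\) generates \(\langle\overline Z\rangle\), and the transversal operators \(X^{\otimes4}\) and \((Z^{\otimes4})^k\) generate the logical Pauli group, modulo global phases (which are trivially available).

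The only step needing care is the mod-\(q\) bookkeeping of the label sum. There are no real obstacles.
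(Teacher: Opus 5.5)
Your proposal is correct and follows essentially the same argument as the paper: the $X^{\otimes 4}$ action holds by definition of the cyclic basis, and $Z^{\otimes 4}$ multiplies every support element of $\ket{\mathbf r}$ by the same phase $\omega^{4r}$. This is because both $4e_r$ and $2e_{r+i}+2e_{r-i}$ have label sum $4r \bmod q$, and invertibility of $4$ modulo odd $q$ then yields the full logical phase subgroup. Your additional bookkeeping fills in details the paper leaves implicit: $(X^{\otimes 4})^q=I$ for the wraparound, reducing indices mod $q$ without changing the phase, and the explicit $k$ with $4k\equiv 1 \pmod q$.
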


\begin{proof}
The identity
\[
    X^{\otimes 4}\ket{\mathbf r}=\ket{\mathbf {r+1}}
\]
holds by definition of the cyclic logical basis.  For the \(Z\)-action, recall
that
\[
    \supp\ket{\mathbf r}
    =
    \{4e_r\}
    \cup
    \left\{
        2e_{r+i}+2e_{r-i}
        :
        1\le i\le \frac{q-1}{2}
    \right\}.
\]
On an occupation vector \(a=(a_0,\dots,a_{q-1})\), the operator
\(Z^{\otimes 4}\) acts by the phase
\[
    \omega^{\sum_{j=0}^{q-1} j a_j}.
\]
For the vertex \(4e_r\), this exponent is \(4r\).  For an edge vector
\(2e_{r+i}+2e_{r-i}\), it is
\[
    2(r+i)+2(r-i)=4r
    \qquad \pmod q.
\]
Thus every occupation vector in \(\supp\ket{\mathbf r}\) acquires the same
phase \(\omega^{4r}\).  Therefore
\[
    Z^{\otimes 4}\ket{\mathbf r}
    =
    \omega^{4r}\ket{\mathbf r}.
\]
The final assertion follows because multiplication by \(4\) is invertible in
\(\Z_q\) when \(q\) is odd.
\end{proof}

\section{A motivating example for even \(q\): the case \texorpdfstring{\(q=4\)}{q=4}}
\label{sec:q4-example}

We now illustrate the even-\(q\) construction in the first case where the
perfect matching structure is nontrivial, namely
\[
    V=\Sym^4(\CC^4).
\]
The point of the example is to show how the support-separation criterion is
implemented when the edges of the complete graph are partitioned into perfect
matchings.  Equivalently, this is a \(1\)-factorization of \(K_4\).
For \(q=2\), the same construction gives the familiar two-dimensional code
from the single vertex packet and the single edge of \(K_2\), but the matching
decomposition has no real content.

\subsection{The even-entry layer and \(K_4\)}

For \(n=4\), the even-entry layer consists of the occupation vectors
\[
    4e_0,\,4e_1,\,4e_2,\,4e_3
    \qquad\text{and}\qquad
    2e_i+2e_j
    \quad
    (0\le i<j\le 3).
\]
In the graph-theoretic picture of Section~\ref{sec:combinatorial-model},
these correspond respectively to the vertices and edges of \(K_4\).

We apply the round-robin \(1\)-factorization of \(K_4\) with fixed point
\(3\), treating the remaining vertices as \(\mathbb Z_3=\{0,1,2\}\).  The
three perfect matchings are
\[
    F_0=\{\{0,3\},\{1,2\}\},
    \quad
    F_1=\{\{1,3\},\{2,0\}\},
    \quad
    F_2=\{\{2,3\},\{0,1\}\}.
\]
Each perfect matching covers every vertex exactly once.

\subsection{The support packets}

We define four support packets.  For each \(r\in\mathbb Z_3\), the edge packet is
\[
    \mathcal S_r
    :=
    \{\,2e_i+2e_j:\{i,j\}\in F_r\,\},
    \qquad
    r=0,1,2.
\]
Explicitly,
\begin{align*}
    \mathcal S_0&=\{\,2e_0+2e_3,\,2e_1+2e_2\,\},\\
    \mathcal S_1&=\{\,2e_1+2e_3,\,2e_2+2e_0\,\},\\
    \mathcal S_2&=\{\,2e_2+2e_3,\,2e_0+2e_1\,\}.
\end{align*}
The vertex packet is
\[
    \mathcal S_3
    :=
    \{\,4e_0,\,4e_1,\,4e_2,\,4e_3\,\}.
\]
These four packets are pairwise disjoint and together partition the
even-entry layer.

Since each \(\mathcal S_r\) is contained in the even-entry layer, each packet
is root-separated by Lemma~\ref{lem:even-layer-root-sep}.  Since the packets
are pairwise disjoint subsets of the even-entry layer, they are also mutually
root-separated.  Therefore all root-error Knill--Laflamme conditions vanish
automatically.

\subsection{The associated states}

The three matching states are
\begin{align*}
    \ket{\psi_0}
    &=
    \frac{1}{\sqrt 2}
    \left(
        \ket{2e_0+2e_3}
        +
        \ket{2e_1+2e_2}
    \right),\\
    \ket{\psi_1}
    &=
    \frac{1}{\sqrt 2}
    \left(
        \ket{2e_1+2e_3}
        +
        \ket{2e_2+2e_0}
    \right),\\
    \ket{\psi_2}
    &=
    \frac{1}{\sqrt 2}
    \left(
        \ket{2e_2+2e_3}
        +
        \ket{2e_0+2e_1}
    \right).
\end{align*}
The vertex state is
\[
    \ket{\psi_3}
    =
    \frac{1}{2}
    \left(
        \ket{4e_0}
        +\ket{4e_1}
        +\ket{4e_2}
        +\ket{4e_3}
    \right).
\]
Thus each \(\ket{\psi_r}\) is the uniform superposition, in the normalized
occupation-number basis, over the support packet \(\mathcal S_r\).

Equivalently, in the computational basis,
\begin{align*}
    \ket{\psi_0}
    &=
    \frac{1}{\sqrt{12}}
    \left(
        \ket{\overline{0033}}
        +
        \ket{\overline{1122}}
    \right),\\
    \ket{\psi_1}
    &=
    \frac{1}{\sqrt{12}}
    \left(
        \ket{\overline{1133}}
        +
        \ket{\overline{0022}}
    \right),\\
    \ket{\psi_2}
    &=
    \frac{1}{\sqrt{12}}
    \left(
        \ket{\overline{2233}}
        +
        \ket{\overline{0011}}
    \right),
\end{align*}
and
\[
    \ket{\psi_3}
    =
    \frac{1}{2}
    \left(
        \ket{0000}
        +\ket{1111}
        +\ket{2222}
        +\ket{3333}
    \right).
\]

\subsection{Cartan balancing}

Let
\[
    H=\diag(h_0,h_1,h_2,h_3),
    \qquad
    h_0+h_1+h_2+h_3=0,
\]
be an arbitrary Cartan element.

For the first matching state,
\begin{align*}
\bra{\psi_0}H\ket{\psi_0}
&=
\frac{1}{2}
\left[
    (2h_0+2h_3)
    +
    (2h_1+2h_2)
\right]\\
&=
h_0+h_1+h_2+h_3\\
&=
0.
\end{align*}
The same calculation applies to \(\ket{\psi_1}\) and \(\ket{\psi_2}\),
because each perfect matching covers every vertex exactly once.  For the
vertex state,
\[
\bra{\psi_3}H\ket{\psi_3}
=
\frac{1}{4}\sum_{i=0}^3 4h_i
=
\sum_{i=0}^3 h_i
=
0.
\]
Therefore
\[
    \bra{\psi_r}H\ket{\psi_r}=0
    \qquad
    \text{for all } r\in\{0,1,2,3\}
    \text{ and all } H\in\mathfrak h.
\]

\subsection{The resulting code}

\begin{proposition}
The vectors
\[
    \ket{\psi_0},\ket{\psi_1},\ket{\psi_2},\ket{\psi_3}
\]
span a permutation-invariant quantum code in \(\Sym^4(\CC^4)\) with
parameters
\[
    ((4,4,2))_4.
\]
\end{proposition}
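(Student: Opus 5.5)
The plan is to apply the support-separation criterion, Theorem~\ref{thm:support-separation}, directly to the four states \(\ket{\psi_0},\dots,\ket{\psi_3}\). The cyclic Corollary~\ref{cor:cyclic-separation} does not apply here, because the vertex packet \(\mathcal S_3\) is not a cyclic translate of the edge packets. The argument has three steps: establish orthonormality, verify the root conditions via the even-entry layer, and quote the Cartan computation already carried out.

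\textbf{Orthonormality.} Each \(\ket{\psi_r}\) is a uniform superposition of distinct normalized occupation-number states, with squared amplitudes \(1/2\) on two vectors or \(1/4\) on four vectors. Hence each \(\ket{\psi_r}\) has norm one.

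The packets are pairwise disjoint. The three perfect matchings \(F_0,F_1,F_2\) partition the six edges of \(K_4\). Vertex vectors \(4e_i\) can never coincide with edge vectors \(2e_i+2e_j\), by comparing the number of nonzero entries. Since the supports are disjoint in an orthonormal basis, the states are mutually orthogonal, so the span has dimension \(4\).

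\textbf{Root conditions.} All four supports lie in the even-entry layer \(\mathcal E\). By Lemma~\ref{lem:even-layer-root-sep}, each support is root-separated, and any two disjoint supports are mutually root-separated. This gives conditions (1)--(3) of Theorem~\ref{thm:support-separation}.

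\textbf{Cartan condition.} The computation in the preceding subsection shows that \(\bra{\psi_r}H\ket{\psi_r}=0\) for every traceless diagonal \(H\) and every \(r\). For \(r=1,2\), I would briefly spell out the step the text summarizes as ``the same calculation applies.'' Because \(F_r\) is a perfect matching, each label \(\ell\in\{0,1,2,3\}\) appears in exactly one edge of \(F_r\). With weight \(1/2\) on each edge, \(h_\ell\) therefore receives total coefficient \(\tfrac12\cdot 2=1\), and the expectation equals \(\sum_\ell h_\ell=0\).

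Thus condition (4) holds with \(\lambda_H=0\). Theorem~\ref{thm:support-separation} then gives distance at least \(2\), and the span is a \(((4,4,2))_4\) code inside \(\Sym^4(\CC^4)\).

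There is no real obstacle. The only point requiring care is that the packets genuinely partition \(\mathcal E\) disjointly, and this reduces to checking that \(\{F_0,F_1,F_2\}\) is a \(1\)-factorization of \(K_4\), which is verified by inspecting the six listed edges.
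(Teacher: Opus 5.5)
Your proposal is correct and matches the paper's proof essentially step for step: normalization plus disjoint supports give a four-dimensional span, Lemma~\ref{lem:even-layer-root-sep} disposes of all root conditions, and the vanishing Cartan expectations let you apply Theorem~\ref{thm:support-separation} directly. Your explicit check of the $r=1,2$ Cartan computation and your remark that Corollary~\ref{cor:cyclic-separation} does not apply here are both accurate.
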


\begin{proof}
Each \(\ket{\psi_r}\) is normalized by construction.  Since the supports
\(\mathcal S_0,\mathcal S_1,\mathcal S_2,\mathcal S_3\) are pairwise
disjoint in the occupation-number basis, the four vectors are mutually
orthogonal.  Hence their span has dimension \(4\).

The support packets are pairwise disjoint subsets of the even-entry layer.
Thus they are root-separated and mutually root-separated by
Lemma~\ref{lem:even-layer-root-sep}.  The Cartan expectations vanish by the
calculation above.  Therefore the hypotheses of
Theorem~\ref{thm:support-separation} are satisfied, and the span detects all
traceless one-body errors.  Hence the code has distance at least \(2\), and
therefore has parameters
\[
    ((4,4,2))_4.
\]
\end{proof}

This example is the smallest even-\(q\) instance in which the graph-theoretic
mechanism is visible.  The case \(q=2\) is the degenerate base case: \(K_2\)
has one edge, so the construction consists of one vertex packet and one edge
packet.  For \(q=4\), the supports are obtained by decomposing the edge set of
\(K_4\) into perfect matchings, together with a separate vertex packet.  In
the general even-\(q\) construction, the same mechanism applies using a
\(1\)-factorization of \(K_q\).

\section{The general even-\texorpdfstring{\(q\)}{q} construction}
\label{sec:even-construction}

We now construct permutation-invariant codes in
\[
    V=\Sym^4(\CC^q)
\]
for every even integer \(q\ge 2\).  In the language of
Section~\ref{sec:combinatorial-model}, the construction decomposes the
vertex-edge set
\[
    V(K_q)\sqcup E(K_q)
\]
into \(q\) disjoint packets satisfying the Cartan balancing conditions.  For
even \(q\), this is accomplished by decomposing the edge set of \(K_q\) into
perfect matchings, that is, by choosing a \(1\)-factorization of \(K_q\).

\subsection{Perfect matchings}

We use the standard round-robin \(1\)-factorization of \(K_q\).  Treat the
vertex set as
\[
    \mathbb Z_{q-1}\sqcup\{q-1\},
\]
where \(q-1\) plays the role of a fixed point.  For each
\(r\in\mathbb Z_{q-1}\), define
\[
    F_r
    :=
    \bigl\{\{r,\,q-1\}\bigr\}
    \cup
    \left\{
        \{r+i,\,r-i\}:1\le i\le \frac{q}{2}-1
    \right\},
\]
where arithmetic in the second set is taken modulo \(q-1\).  Each \(F_r\) is
a perfect matching: it contains \(q/2\) disjoint edges and covers every
vertex exactly once.  Moreover,
\[
    E(K_q)=\bigsqcup_{r\in\mathbb Z_{q-1}}F_r,
\]
so the \(F_r\) form a \(1\)-factorization of \(K_q\).

For \(r\in\mathbb Z_{q-1}\), define the edge packet
\[
    \mathcal S_r
    :=
    \{\,2e_i+2e_j:\{i,j\}\in F_r\,\}.
\]
Equivalently,
\[
    \mathcal S_r
    =
    \{\,2e_r+2e_{q-1}\,\}
    \cup
    \left\{
        2e_{r+i}+2e_{r-i}:1\le i\le \frac{q}{2}-1
    \right\}.
\]

We also define the vertex packet
\[
    \mathcal S_{q-1}
    :=
    \{\,4e_i:0\le i\le q-1\,\}.
\]
Then
\[
    \mathcal S_0,\mathcal S_1,\dots,\mathcal S_{q-2},\mathcal S_{q-1}
\]
partition the full even-entry layer
\[
    \mathcal E\cong V(K_q)\sqcup E(K_q).
\]

\subsection{The associated states}

Let
\[
    \ket{a_0,\dots,a_{q-1}}
\]
denote the normalized occupation-number state in \(\Sym^4(\CC^q)\).  Thus
\(\ket{4e_i}\) is the pure word \(\ket{i i i i}\), while for \(i\neq j\),
\[
    \ket{2e_i+2e_j}
    =
    \frac{1}{\sqrt 6}\ket{\overline{i i j j}},
\]
where \(\ket{\overline{i i j j}}\) denotes the sum of the six distinct
permutations of the word \(i i j j\).

For \(r\in\mathbb Z_{q-1}\), define the matching states
\[
    \ket{\psi_r}
    :=
    \sqrt{\frac{2}{q}}
    \left(
        \ket{2e_r+2e_{q-1}}
        +
        \sum_{i=1}^{q/2-1}
        \ket{2e_{r+i}+2e_{r-i}}
    \right),
\]
with arithmetic in \(\mathbb Z_{q-1}\).  Since each \(F_r\) contains
\(q/2\) edges, the coefficient \(\sqrt{2/q}\) normalizes \(\ket{\psi_r}\).

Define the vertex state
\[
    \ket{\psi_{q-1}}
    :=
    \frac{1}{\sqrt q}
    \sum_{i=0}^{q-1}
    \ket{4e_i}.
\]

Equivalently, each codeword is the uniform superposition, in the normalized
occupation-number basis, over its support packet:
\[
    \ket{\psi_r}
    =
    \sqrt{\frac{2}{q}}
    \sum_{a\in\mathcal S_r}\ket a
    \; (r\in\mathbb Z_{q-1}),
    \quad
    \ket{\psi_{q-1}}
    =
    \frac{1}{\sqrt{q}} 
    \sum_{a\in\mathcal S_{q-1}}\ket a.
\]
By construction,
\[
    \supp\ket{\psi_r}=\mathcal S_r
    \quad(r\in\mathbb Z_{q-1}),
    \qquad
    \supp\ket{\psi_{q-1}}=\mathcal S_{q-1}.
\]

\subsection{Disjointness and root separation}

\begin{lemma}
\label{lem:even-root-separation}
The supports
\[
    \mathcal S_0,\mathcal S_1,\dots,\mathcal S_{q-2},\mathcal S_{q-1}
\]
are pairwise disjoint.  Moreover, each support is root-separated, and any two
distinct supports are mutually root-separated.
\end{lemma}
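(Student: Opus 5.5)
The plan follows the proof of Lemma~\ref{lem:general-root-separation} closely. Root separation is essentially free, so the real content is disjointness, and that reduces to the round-robin edge sets \(F_r\) being pairwise disjoint.

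First I will separate vertex vectors from edge vectors. The vertex packet \(\mathcal S_{q-1}\) consists only of vectors \(4e_i\), which have a single nonzero coordinate equal to \(4\). Every element of \(\mathcal S_r\) with \(r\in\mathbb Z_{q-1}\) has the form \(2e_i+2e_j\) with \(i\neq j\), which has two nonzero coordinates equal to \(2\). Hence \(\mathcal S_{q-1}\cap\mathcal S_r=\varnothing\) for all \(r\in\mathbb Z_{q-1}\).

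Next I will compare two edge packets \(\mathcal S_r\) and \(\mathcal S_s\) with \(r\neq s\) in \(\mathbb Z_{q-1}\). Since \(2e_i+2e_j\) determines the unordered pair \(\{i,j\}\), it suffices to show \(F_r\cap F_s=\varnothing\). I split by whether the edge contains the fixed point \(q-1\).
\begin{itemize}
\item An edge \(\{t,q-1\}\) lies in \(F_r\) only when \(t=r\), because the other edges of \(F_r\) have both endpoints in \(\mathbb Z_{q-1}\). So this edge determines \(r\).
\item An edge \(\{r+i,r-i\}\) with \(1\le i\le q/2-1\) has endpoints summing to \(2r\) modulo \(q-1\). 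Since \(q-1\) is odd, \(2\) is invertible modulo \(q-1\), so the sum determines \(r\). This is the same midpoint argument as in the odd case.
\end{itemize}
I also need each \(F_r\) to be a genuine matching of distinct edges. The endpoints satisfy \(r+i\neq r-i\), since \(2i\not\equiv 0\) modulo \(q-1\) for the stated range of \(i\), and distinct \(i\) give distinct pairs. The paper asserts this already; it matters only for the normalization, not for the lemma.

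Finally, every support is a subset of the even-entry layer \(\mathcal E\). By Lemma~\ref{lem:even-layer-root-sep}, no two distinct elements of \(\mathcal E\) are root-adjacent. So each \(\mathcal S_r\) is root-separated, and disjoint subsets are mutually root-separated.

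The only step with real content is the edge-packet disjointness. It hinges on the parity fact that \(q-1\) is odd, which is exactly what makes the round-robin construction work. The degenerate case \(q=2\) is immediate: there is one edge packet \(\{2e_0+2e_1\}\) and one vertex packet, so the lemma holds trivially.
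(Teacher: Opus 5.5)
Your proposal is correct and follows the paper's proof. Vertex vectors $4e_i$ cannot equal edge vectors $2e_i+2e_j$. Distinct edge packets are disjoint because the $F_r$ partition $E(K_q)$. Root separation follows from Lemma~\ref{lem:even-layer-root-sep}, since every packet lies in the even-entry layer $\mathcal E$.

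The only difference is that you actually prove $F_r\cap F_s=\emptyset$ for $r\neq s$, using the fixed-point edge and the midpoint argument modulo the odd number $q-1$. The paper instead cites the decomposition $E(K_q)=\bigsqcup_{r}F_r$, which it asserted without proof when introducing the round-robin factorization.
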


\begin{proof}
The packet \(\mathcal S_{q-1}\) consists of the vertex vectors \(4e_i\), while
each \(\mathcal S_r\), \(r\in\mathbb Z_{q-1}\), consists of the edge vectors
\(2e_i+2e_j\) with \(\{i,j\}\in F_r\).  A vertex vector cannot equal an edge
vector.  Since
\[
    E(K_q)=\bigsqcup_{r\in\mathbb Z_{q-1}}F_r,
\]
each edge vector \(2e_i+2e_j\) lies in exactly one edge packet.  Therefore
the packets are pairwise disjoint.

Each support is contained in the even-entry layer \(\mathcal E\).  Hence each
support is root-separated by Lemma~\ref{lem:even-layer-root-sep}.  Since the
supports are pairwise disjoint subsets of \(\mathcal E\), they are also
mutually root-separated.
\end{proof}

Thus all root-error Knill--Laflamme conditions vanish automatically.

\subsection{Cartan balancing}

It remains to verify the Cartan conditions.  Let
\[
    H=\diag(h_0,\dots,h_{q-1}),
    \qquad
    \sum_{i=0}^{q-1}h_i=0,
\]
be an arbitrary traceless diagonal operator.

For each \(r\in\mathbb Z_{q-1}\), since \(F_r\) is a perfect matching, every
vertex of \(K_q\) appears in exactly one edge of \(F_r\).  Therefore
\[
\bra{\psi_r}H\ket{\psi_r}
=
\frac{2}{q}
\sum_{\{i,j\}\in F_r}(2h_i+2h_j)
=
\frac{4}{q}
\sum_{i=0}^{q-1}h_i
=
0.
\]
For the vertex state,
\[
\bra{\psi_{q-1}}H\ket{\psi_{q-1}}
=
\frac{1}{q}
\sum_{i=0}^{q-1}4h_i
=
\frac{4}{q}
\sum_{i=0}^{q-1}h_i
=
0.
\]
Therefore
\[
    \bra{\psi_r}H\ket{\psi_r}=0
    \qquad
    \text{for all } r=0,1,\dots,q-1
    \text{ and all } H\in\mathfrak h.
\]

\subsection{The resulting code}

\begin{theorem}
\label{thm:even-family}
For every even integer \(q\ge 2\), the vectors
\[
    \ket{\psi_0},\ket{\psi_1},\dots,\ket{\psi_{q-2}},\ket{\psi_{q-1}}
\]
span a permutation-invariant quantum code in \(\Sym^4(\CC^q)\) with
parameters
\[
    ((4,q,2))_q.
\]
\end{theorem}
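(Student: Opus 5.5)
The plan is to apply Theorem~\ref{thm:support-separation} directly to the family \(\ket{\psi_0},\dots,\ket{\psi_{q-1}}\), with \(K=q\). The cyclic corollary is not used here, because the vertex packet is not a translate of the edge packets. We check the four hypotheses in turn and then count dimensions.

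First, normalization and orthogonality. For \(r\in\mathbb Z_{q-1}\), the state \(\ket{\psi_r}\) is a uniform superposition of \(|F_r|=q/2\) orthonormal occupation states with coefficient \(\sqrt{2/q}\), so its norm squared is \((q/2)(2/q)=1\). The vertex state has \(q\) terms with coefficient \(1/\sqrt q\), so it is also normalized. Lemma~\ref{lem:even-root-separation} gives that the supports are pairwise disjoint. Since the occupation-number basis is orthonormal, the \(q\) vectors are mutually orthogonal, and their span has dimension \(q\).

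Second, the root conditions. Lemma~\ref{lem:even-root-separation} also shows that each support is root-separated and that distinct supports are mutually root-separated, because all of them lie in the even-entry layer \(\mathcal E\) (Lemma~\ref{lem:even-layer-root-sep}). Proposition~\ref{prop:root-vanishing-weight} then makes every root-error matrix element vanish, which gives hypotheses (1)--(3) of Theorem~\ref{thm:support-separation}.

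Third, the Cartan condition. The computation in the preceding subsection shows \(\bra{\psi_r}H\ket{\psi_r}=0\) for every traceless diagonal \(H\) and every \(r\). For the edge packets this uses that each \(F_r\) is a perfect matching, so every vertex is counted exactly once. For the vertex packet it follows directly from tracelessness. Hence the diagonal expectations are independent of \(r\), with common value \(0\), which is hypothesis (4).

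The only step with real content is the claim that the round-robin sets \(F_r\) are perfect matchings partitioning \(E(K_q)\); I expect this to be the main obstacle. I will verify it as follows:
\begin{itemize}
\item \emph{Matching property.} Because \(q-1\) is odd, the map \(i\mapsto -i\) on \(\mathbb Z_{q-1}\) fixes only \(0\). So the pairs \(\{r+i,r-i\}\) for \(1\le i\le q/2-1\) cover \(\mathbb Z_{q-1}\setminus\{r\}\) exactly once, and \(\{r,q-1\}\) covers the rest.
\item \emph{Partition of the edges.} An edge \(\{a,b\}\) with \(a,b\in\mathbb Z_{q-1}\) lies in \(F_r\) exactly when \(a+b\equiv 2r\), which determines \(r\) uniquely since \(2\) is invertible mod \(q-1\). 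An edge \(\{a,q-1\}\) lies only in \(F_a\).
\end{itemize}
With the hypotheses in place, Theorem~\ref{thm:support-separation} gives distance at least \(2\), so the span is a \(((4,q,2))_q\) code.

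The degenerate case \(q=2\) should be noted. There \(\mathbb Z_1\) has a single element, \(F_0=\{\{0,1\}\}\), and the code is spanned by \(\ket{2,2}\) and \((\ket{4,0}+\ket{0,4})/\sqrt2\). The same argument applies verbatim.
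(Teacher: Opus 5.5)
Your proposal is correct and follows essentially the same route as the paper: orthonormality from the disjoint supports of Lemma~\ref{lem:even-root-separation}, automatic root conditions from the even-entry layer, vanishing Cartan expectations from the perfect-matching computation, and then a direct application of Theorem~\ref{thm:support-separation}. Your explicit check that the round-robin sets $F_r$ are perfect matchings partitioning $E(K_q)$, and your treatment of the degenerate case $q=2$, supply details that the paper only asserts.
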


\begin{proof}
The vectors are normalized by construction.  By
Lemma~\ref{lem:even-root-separation}, their supports are pairwise disjoint,
so the vectors are mutually orthogonal.  Hence their span has dimension \(q\).

The same lemma shows that the supports are root-separated and mutually
root-separated.  Therefore all root-error matrix elements vanish.  The Cartan
expectations vanish by the computation above.  Thus the hypotheses of
Theorem~\ref{thm:support-separation} are satisfied, and the span detects all
traceless one-body errors.  Hence the code has distance at least \(2\).
Since its length is \(4\) and its dimension is \(q\), it has parameters
\[
    ((4,q,2))_q.
\]
\end{proof}

\section{Linear programming setup and nonexistence for \texorpdfstring{$n\le 3$}{n <= 3}}
\label{sec:n-le-3}

We now prove that no
permutation-invariant code of dimension \(q\) in $ \Sym^n(\CC^q) $ can detect all single-site errors when \(n\le 3\).  Combined with the explicit
constructions of Sections~\ref{sec:general-construction}
and~\ref{sec:even-construction}, this shows that the family $ ((4,q,2))_q $
has minimal block length for every \(q\ge 2\).

We work in the multiplicity-free intrinsic MacWilliams framework
\cite{usIntrinsicCodes,JustPolynomials} for
\[
    V=\Sym^n(\CC^q),
    \qquad
    N:=\dim V=\binom{n+q-1}{n},
\]
in which the conjugation representation on \(\L(V)\) decomposes into sectors
indexed by $ r=0,1,\dots,n.$

A code projector \(P\) of rank \(K\) determines intrinsic projector and twirl
enumerators
\[
    A_r:=A_r(P,P),
    \qquad
    B_r:=B_r(P,P).
\]
Following the coding-theoretic normalization used in
\cite{usIntrinsicCodes,JustPolynomials}, define
\[
    \widetilde A_r:=\frac{N}{K^2}A_r,
    \qquad
    \widetilde B_r:=\frac{N}{K}B_r.
\]
We write
\[
    \widetilde{\Avec}
    =
    (\widetilde A_0,\dots,\widetilde A_n)^T,
    \qquad
    \widetilde{\Bvec}
    =
    (\widetilde B_0,\dots,\widetilde B_n)^T.
\]
With this normalization,
\[
    \widetilde A_0=\widetilde B_0=1.
\]

The intrinsic MacWilliams relation becomes
\begin{equation}
\label{eq:macwilliams-nle3}
    \widetilde{\Bvec}=K\,M\,\widetilde{\Avec},
\end{equation}
where \(M\) is the intrinsic MacWilliams matrix for
\(\Sym^n(\CC^q)\).  Equivalently, in the unnormalized intrinsic variables one
has
\[
    \Bvec=M\Avec,
\]
and the scaling above introduces the factor \(K\) in
\eqref{eq:macwilliams-nle3}.

The intrinsic Knill--Laflamme inequalities become
\[
    \widetilde A_r\ge 0,
    \qquad
    \widetilde A_r\le \widetilde B_r
    \qquad
    (0\le r\le n).
\]
If the code detects all single-site errors, then the adjoint sector is
detected, and therefore
\[
    \widetilde A_1=\widetilde B_1.
\]
Thus any permutation-invariant code of dimension \(K\) and distance at least
\(2\) must satisfy
\begin{equation}
\label{eq:d2-constraints-nle3}
    \widetilde A_0=\widetilde B_0=1,
    \quad \widetilde A_1=\widetilde B_1, \quad
     0 \le \widetilde A_r\le \widetilde B_r
    \quad (0\le r\le n)   
\end{equation}
together with \eqref{eq:macwilliams-nle3}.

We now specialize to \(K=q\) and show that this feasibility problem has no
solution for \(n=1,2,3\).  Although the contradiction only uses the trivial
and adjoint rows of the MacWilliams transform, we display the full matrices in
these small cases.

\begin{theorem}
\label{thm:no-small-length}
For every \(q\ge 2\), there is no permutation-invariant code of dimension
\(q\) in \(\Sym^n(\CC^q)\) that detects all single-site errors when
\(n\le 3\).  Equivalently, no permutation-invariant code with parameters
\[
    ((n,q,d))_q
\]
and \(d\ge 2\) exists for \(n\le 3\).
\end{theorem}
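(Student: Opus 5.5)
The plan is to carry out the linear-programming feasibility argument set up just before the statement, with $K=q$, and show that the constraints \eqref{eq:d2-constraints-nle3} together with \eqref{eq:macwilliams-nle3} are inconsistent for $n=1,2,3$. By the remark preceding the theorem, only two rows of $M$ should be needed: the trivial row ($r=0$) and the adjoint row ($r=1$). The case $n=1$ can be dispatched first and directly. Here $V=\CC^q$ and $K=q$, so $P=I$. Then $PA^{(1)}P=A$ is not a scalar for nonzero traceless $A$, so Lemma~\ref{lem:collective-errors-suffice} already fails. It is still worth checking that the LP reproduces this, as a sanity check on the normalization of $M$.

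For general $n$, I will first make the two relevant rows of $M$ explicit. The conjugation representation decomposes as
\[
\L(\Sym^n(\CC^q))\cong\bigoplus_{r=0}^n W_r,\qquad W_r \text{ of highest weight } (r,0,\dots,0,r).
\]
The entries $M_{0s}$ and $M_{1s}$ are then determined by the dimensions $\dim W_s$ and by the normalized adjoint-sector traces $\operatorname{tr}_{W_s}$ of the quadratic Casimir. Concretely, $\widetilde B_1$ is a twisted sum $\sum_a \operatorname{tr}(PT_aPT_a^\dagger)$ over an orthonormal basis $T_a$ of the collective $\mathfrak{sl}_q$ action. Inserting the sector decomposition of $P\otimes P$ and using the Casimir eigenvalue $c_s=s(s+q-1)\cdot(\text{const})$ on $W_s$ expresses $M_{1s}$ as an affine function of $c_s$. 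The trivial row reduces to $\sum_s \dim(W_s)\,\widetilde A_s$-type normalization, since $\operatorname{tr}(P)^2$ and $\operatorname{tr}(P^2)$ are fixed by $K$.

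The contradiction will then come from substituting these rows. The adjoint row gives $\widetilde B_1=K\bigl(M_{10}+M_{11}\widetilde A_1+\sum_{s\ge2}M_{1s}\widetilde A_s\bigr)$. Imposing $\widetilde A_1=\widetilde B_1$ and solving for $\widetilde A_1$ leaves an expression in the $\widetilde A_s$ with $s\ge 2$. These are bounded using $\widetilde A_s\ge0$ together with the trivial-row identity, which fixes $\sum_s\widetilde A_s$ in terms of $N/K$. For $n\le3$ there are at most two unknowns $\widetilde A_2,\widetilde A_3$, so this reduces to a two-variable linear system.

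The expected outcome is a necessary inequality of the form $K\le f(n,q)$, and the final step is to verify $f(n,q)<q$ for $n=2,3$ and all $q\ge2$. This is plausible since $N=\binom{n+q-1}{n}$ is small relative to $q\cdot\dim\mathfrak{sl}_q$ when $n\le3$. Alternatively, the failure may show up as a forced negative value of $\widetilde A_s$ or a violation of $\widetilde A_s\le\widetilde B_s$.

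The main obstacle is getting the exact entries $M_{1s}$ with the paper's normalization of $\widetilde A,\widetilde B$. Sign and scaling errors there would invalidate the final inequality. To control this, I will compute $M$ fully for $n=2$, where everything can be cross-checked against the explicit Knill--Laflamme condition on $\Sym^2(\CC^q)$. Only then will I use the Casimir formula for $n=3$, and check polynomial-in-$q$ positivity of $q-f(3,q)$ for all $q\ge2$.
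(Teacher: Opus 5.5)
Your plan is essentially the paper's proof: impose $\widetilde A_0=\widetilde B_0=1$, $\widetilde A_1=\widetilde B_1$ and $\widetilde A_r\ge 0$ on the trivial and adjoint rows of $\widetilde{\Bvec}=qM\widetilde{\Avec}$. Your Schur/Casimir derivation of the adjoint row is sound and reproduces the paper's entries, namely $M_{1s}=M_{10}-\tfrac{s(s+q-1)}{q+2}$ for $n=2$ and $M_{1s}=M_{10}-\tfrac{2s(s+q-1)}{(q+2)(q+3)}$ for $n=3$, which force $\widetilde A_1=-\tfrac{2(q+1)}{q+2}<0$ when $n=2$ and give the infeasible equation $(3q-2)(q+3)\widetilde A_1+2q(q+4)\widetilde A_2=-3(q-1)(q+2)$ when $n=3$. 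One slip to fix: the trivial row is uniform, $M_{0s}=1/N$, so it gives the unweighted identity $\sum_s\widetilde A_s=N/K$ that you later use, not a $\dim W_s$-weighted sum (the factors $\dim W_s/N$ form the first column of $M$, not the first row).
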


\begin{proof}
In each of the cases \(n=1,2,3\), infeasibility is detected by the first two
rows of the normalized MacWilliams relation
\[
    \widetilde{\Bvec}=qM\widetilde{\Avec},
\]
together with
\[
    \widetilde A_0=\widetilde B_0=1,
    \qquad
    \widetilde A_1=\widetilde B_1,
    \qquad
    \widetilde A_r\ge 0.
\]

\medskip
\noindent\textbf{Case \(n=1\).}
Here
\[
    N=q,
    \qquad
    M=
    \begin{pmatrix}
    \tfrac{1}{q} & \tfrac{1}{q} \\
    \tfrac{q^2-1}{q} & -\tfrac{1}{q}
    \end{pmatrix}.
\]
Since \(K=q\), the normalized MacWilliams relation
\[
    \widetilde{\Bvec}=qM\widetilde{\Avec}
\]
gives
\[
    \begin{pmatrix}
    \widetilde B_0\\
    \widetilde B_1
    \end{pmatrix}
    =
    \begin{pmatrix}
    1 & 1 \\
    q^2-1 & -1
    \end{pmatrix}
    \begin{pmatrix}
    \widetilde A_0\\
    \widetilde A_1
    \end{pmatrix}.
\]
Thus
\[
    \widetilde B_0=\widetilde A_0+\widetilde A_1,
    \qquad
    \widetilde B_1=(q^2-1)\widetilde A_0-\widetilde A_1.
\]
Using \(\widetilde A_0=\widetilde B_0=1\), the first equation gives
\[
    \widetilde A_1=0.
\]
Substituting into the second equation yields
\[
    \widetilde B_1=q^2-1.
\]
However, adjoint-sector detection requires
\[
    \widetilde B_1=\widetilde A_1.
\]
Thus
\[
    q^2-1=0,
\]
which is impossible for \(q\ge 2\).  Hence no such code exists for \(n=1\).

\medskip
\noindent\textbf{Case \(n=2\).}
Here $N=\frac{q(q+1)}{2}$, and 
\[
    M=
    \smqty(
    \dfrac{2}{q(q+1)} &
    \dfrac{2}{q(q+1)} &
    \dfrac{2}{q(q+1)}
    \\[10pt]
    \dfrac{2(q-1)}{q} &
    \dfrac{q(q+2)-4}{q(q+2)} &
    \dfrac{-4}{q(q+2)}
    \\[10pt]
    \dfrac{q(q-1)(q+3)}{2(q+1)} &
    \dfrac{-q(q+3)}{(q+1)(q+2)} &
    \dfrac{2}{(q+1)(q+2)}
    ).
\]
Since \(K=q\), the normalized MacWilliams relation
\[
    \widetilde{\Bvec}=qM\widetilde{\Avec}
\]
gives
\begin{align*}
    \widetilde B_0
    &=
    \frac{2(\widetilde A_0+\widetilde A_1+\widetilde A_2)}{q+1},
    \\[4pt]
    \widetilde B_1
    &=
    \frac{
    2(q-1)(q+2)\widetilde A_0
    +
    \bigl(q(q+2)-4\bigr)\widetilde A_1
    -
    4\widetilde A_2
    }{q+2}.
\end{align*}
Imposing \(\widetilde A_0=\widetilde B_0=1\), the first equation gives
\begin{equation}
\label{eq:n2-first-M}
    \widetilde A_1+\widetilde A_2=\frac{q-1}{2}.
\end{equation}
Imposing \(\widetilde B_1=\widetilde A_1\), multiplying through by \(q+2\),
and rearranging gives
\begin{equation}
\label{eq:n2-second-M}
    (q-2)(q+3)\widetilde A_1
    -
    4\widetilde A_2
    =
    -2(q-1)(q+2).
\end{equation}
Solving \eqref{eq:n2-first-M}--\eqref{eq:n2-second-M} gives
\[
    \widetilde A_1=-\frac{2(q+1)}{q+2},
    \qquad
    \widetilde A_2=\frac{q^2+5q+2}{2(q+2)}.
\]
Since \(q\ge 2\), we have \(\widetilde A_1<0\), contradicting the required
nonnegativity of the normalized enumerators.  Hence no such code exists for
\(n=2\).

\medskip
\noindent\textbf{Case \(n=3\).}
Here
\[
    N=\frac{q(q+1)(q+2)}{6},
\]
and the intrinsic MacWilliams matrix $M$ is
\[
\smqty(
\tfrac{6}{q(q+1)(q+2)} &
\tfrac{6}{q(q+1)(q+2)} &
\tfrac{6}{q(q+1)(q+2)} &
\tfrac{6}{q(q+1)(q+2)} \\[8pt]
\tfrac{6(q-1)}{q(q+2)} &
\tfrac{2(2q^2+6q-9)}{q(q+2)(q+3)} &
\tfrac{2(q^2+4q-9)}{q(q+2)(q+3)} &
\tfrac{-18}{q(q+2)(q+3)} \\[10pt]
\tfrac{3q(q-1)(q+3)}{2(q+1)(q+2)} &
\tfrac{q(q^2+4q-9)}{2(q+1)(q+2)} &
\tfrac{-2(2q^2+10q-1)}{(q+1)(q+2)(q+4)} &
\tfrac{18}{(q+1)(q+2)(q+4)} \\[10pt]
\tfrac{q(q-1)(q+1)(q+5)}{6(q+2)} &
\tfrac{-q(q+1)(q+5)}{2(q+2)(q+3)} &
\tfrac{2(q+1)(q+5)}{(q+2)(q+3)(q+4)} &
\tfrac{-6}{(q+2)(q+3)(q+4)}
). 
\]
Since \(K=q\), the normalized MacWilliams relation
\[
    \widetilde{\Bvec}=qM\widetilde{\Avec}
\]
gives
\begin{align*}
    \widetilde B_0
    &=
    \tfrac{
    6(\widetilde A_0+\widetilde A_1+\widetilde A_2+\widetilde A_3)
    }{(q+1)(q+2)},
    \\[4pt]
    \widetilde B_1
    &=
    \tfrac{
    6(q-1)(q+3)\widetilde A_0
    +
    2(2q^2+6q-9)\widetilde A_1
    +
    2(q^2+4q-9)\widetilde A_2
    -
    18\widetilde A_3
    }{(q+2)(q+3)}.
\end{align*}
Imposing \(\widetilde A_0=\widetilde B_0=1\), the first equation gives
\begin{equation}
\label{eq:n3-first-M}
    \widetilde A_1+\widetilde A_2+\widetilde A_3
    =
    \frac{(q-1)(q+4)}{6}.
\end{equation}
Using \eqref{eq:n3-first-M} to eliminate \(\widetilde A_3\) from the equation
\(\widetilde B_1=\widetilde A_1\), and clearing denominators, gives
\begin{equation}
\label{eq:n3-second-M}
    (3q-2)(q+3)\widetilde A_1
    +
    2q(q+4)\widetilde A_2
    =
    -3(q-1)(q+2).
\end{equation}
For \(q\ge 2\), the coefficients on the left-hand side are strictly positive:
\[
    (3q-2)(q+3)>0,
    \qquad
    2q(q+4)>0.
\]
The LP constraints require
\[
    \widetilde A_1\ge 0,
    \qquad
    \widetilde A_2\ge 0.
\]
Thus the left-hand side of \eqref{eq:n3-second-M} is nonnegative, whereas the
right-hand side is strictly negative for \(q\ge 2\).  This is a contradiction,
so no such code exists for \(n=3\).

Combining the cases \(n=1,2,3\), we obtain the result.

We note that in each of the cases \(n=1,2,3\), infeasibility is already
detected by the first two normalized MacWilliams equations, the
coding-theoretic normalization
\[
    \widetilde A_0=\widetilde B_0=1,
\]
the adjoint-sector equality
\[
    \widetilde A_1=\widetilde B_1,
\]
and the elementary positivity constraints on the normalized enumerators. 
\end{proof}

As an immediate consequence, the constructions of
Theorems~\ref{thm:general-family} and~\ref{thm:even-family} are minimal in
block length.

\begin{corollary}
\label{cor:minimal-block-length}
For every integer \(q\ge 2\), the code constructed in
Theorem~\ref{thm:general-family} when \(q\) is odd and in
Theorem~\ref{thm:even-family} when \(q\) is even has minimal block length
among permutation-invariant codes with parameters
\[
    ((n,q,d))_q
    \qquad
    \text{with } d\ge 2.
\]
Equivalently, four physical qudits are necessary and sufficient for a
distance-two permutation-invariant encoding of one logical qudit of local
dimension \(q\).
\end{corollary}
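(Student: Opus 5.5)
The corollary combines an upper bound and a lower bound on the block length.

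For the upper bound, we invoke Theorem~\ref{thm:general-family} when \(q\) is odd and Theorem~\ref{thm:even-family} when \(q\) is even. Together they give, for every \(q\ge 2\), an explicit permutation-invariant code in \(\Sym^4(\CC^q)\) of dimension \(q\) and distance at least \(2\). So block length \(4\) is achievable.

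For the lower bound, we argue by contradiction. Suppose a permutation-invariant code with parameters \(((n,q,d))_q\), \(d\ge 2\), and \(n\le 3\) exists. Having distance at least \(2\) means it detects all single-site errors; equivalently, by Lemma~\ref{lem:collective-errors-suffice}, it satisfies the collective Knill--Laflamme conditions. This contradicts Theorem~\ref{thm:no-small-length}.

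The case \(n=0\) needs a separate word if one allows it, since the theorem only treats \(n=1,2,3\). Here \(\Sym^0(\CC^q)\) is one-dimensional, so it cannot contain a code of dimension \(q\ge 2\). Hence every such code has \(n\ge 4\), and the constructed codes attain this minimum. The equivalent phrasing, that four qudits are necessary and sufficient, is just a restatement of these two bounds.

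The only point needing care is that the hypothesis ``\(d\ge 2\)'' in the corollary matches ``detects all single-site errors'' in Theorem~\ref{thm:no-small-length}. Distance \(d\ge2\) means every error of weight at most \(d-1\ge 1\) is detectable, which in particular includes all single-site errors. With that observation, no step presents a real obstacle; the substantive work was already done in the cited theorems.
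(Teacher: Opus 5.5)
Your proposal is correct and follows the paper's proof: existence at length four from Theorem~\ref{thm:general-family} for odd $q$ and Theorem~\ref{thm:even-family} for even $q$, combined with nonexistence for $n\le 3$ from Theorem~\ref{thm:no-small-length}. Your remarks on the degenerate case $n=0$ and on $d\ge 2$ implying detection of single-site errors are harmless refinements that the paper leaves implicit.
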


\begin{proof}
For odd \(q\ge 3\), Theorem~\ref{thm:general-family} gives an explicit
permutation-invariant code with parameters $ ((4,q,2))_q. $
For even \(q\ge 2\), Theorem~\ref{thm:even-family} gives an explicit
permutation-invariant code with the same parameters.  Thus, for every integer
\(q\ge 2\), a length-four code exists.

On the other hand, Theorem~\ref{thm:no-small-length} shows that no
permutation-invariant code of dimension \(q\) and distance at least \(2\)
exists for \(n\le 3\).  Therefore \(n=4\) is the smallest possible block
length for every \(q\ge 2\).
\end{proof}

\bibliographystyle{IEEEtran}
\bibliography{biblio}

\end{document}